\begin{document}

\title{Convergence Theorems for the Non-Local Means Filter
}


\author{Qiyu Jin \and Ion Grama \and Quansheng Liu     
}


\institute{Q. Jin \at
               UMR 7590, Institut de Min\'{e}ralogie et de Physique des Milieux Condens\'{e}s, Universit\'{e} Pierre et Marie Curie,  Campus Jussieu,
4 place Jussieu,
75005 Paris, France  \\
              Tel.: +33-144275241\\
              \email{Jin.Qiyu@impmc.upmc.fr}           
           \and
           I. Grama \at
               UMR 6205, Laboratoire de Mathématiques de Bretagne Atlantique, Universit\'{e} de Bretagne-Sud, Campus de Tohaninic, BP 573,
56017 Vannes, France
\\
Universit\'{e} Europ\'{e}ne de Bretagne, France
\\
              Tel.: +33-297017215\\
              \email{ion.grama@univ-ubs.fr}           
                    \and
           Q. Liu \at
               UMR 6205, Laboratoire de Mathématiques de Bretagne Atlantique, Universit\'{e} de Bretagne-Sud, Campus de Tohaninic, BP 573,
56017 Vannes, France
\\
Universit\'{e} Europ\'{e}ne de Bretagne, France
\\
              Tel.: +33-297017140\\
              \email{quansheng.liu@univ-ubs.fr}
}

\date{Received: date / Accepted: date}

\maketitle

\begin{abstract}
In this paper, we establish convergence theorems for the Non-Local Means Filter in removing the additive Gaussian noise.
We employ the techniques of "Oracle" estimation to determine the order of the widths of the similarity patches and search windows in the aforementioned filter.
We propose a practical choice of these parameters which improve the restoration quality of the filter compared with the usual choice of parameters.
\keywords{ Non-Local Means \and  Gaussian noise \and  "Oracle" estimator \and Mean Squared Error \and weighted means}
\end{abstract}

\section{Introduction}

We deal with the additive
Gaussian noise model:%
\begin{equation}
Y(x)=f(x)+\varepsilon (x),\;x\in \mathbf{I,}  \label{s1Y1}
\end{equation}%
where $\mathbf{I}$ is the uniform $N\times N$ grid of pixels on the unit square, $Y=\left( Y\left( x\right) \right) _{x\in \mathbf{I}}$ is the observed
image brightness, $f:[0,1]^{2}\rightarrow \mathbf{R}_{+}$ is an original image (unknown target
regression function) and $\varepsilon =\left( \varepsilon \left( x\right)
\right) _{x\in \mathbf{I}}$ are independent and identically
distributed (i.i.d.) Gaussian random variables with mean $0$ and standard
deviation $\sigma >0.$

Important denoising techniques for the model (\ref{s1Y1})
have been developed in recent years. A very
significant step in these developments was the introduction of the Non-Local Means Filter by Buades et al \cite{buades2005review}.
For closely related works, see for example
\cite{polzehl2006propagation,kervrann2008local,buades2009note,Katkovnik2010local,lou2010image}.

The basic idea  of  the filters by weighted means is to estimate the unknown image $f(x_{0})$ by a weighted
average of observations $Y(x)$ of the form
\begin{equation}
\widetilde{f}_{w}(x_{0})=\sum_{x\in \mathbf{U_{x_0,h}}}w(x)Y(x),  \label{s1fx}
\end{equation}%
where for each $x_0$ and $h>0$, $\mathbf{U}_{x_0,h}$ denotes a square window with center $x_0$ and width $2h$,
 $w(x)$ are some
non-negative weights satisfying $\sum_{x\in \mathbf{U}_{x_0,h}}w(x)=1.$ The choice of
the weights $w(x)$ is usually based on two criteria: a spatial criterion so
that $w(x)$ is a decreasing function of the distance between $x$ and $x_0$,
and a similarity criterion so that $w(x)$ is also a decreasing function of the brightness difference $|Y(x)-Y(x_0)|$  (see e.g. \cite{yaroslavsky1985digital,tomasi1998bilateral}),
which measures the similarity between the pixels $x$ and $x_0$.
In the Non-Local Means Filter, $h>0$ can be chosen relatively large, and the weights $w(x)$ are calculated according to the similarity between data patches $\mathbf{Y}_{x,\eta}= (Y(y): y\in \mathbf{U}_{x,\eta})$
(identified as a vector whose composants are ordered lexicographically) and $\mathbf{Y}_{x_0,\eta}= (Y(y): y\in \mathbf{U}_{x_0,\eta})$,
 instead of the similarity between just the pixels $x$ and $x_0$.
  Here $\eta >0$ is the size parameter of data patches.

The  Non-Local Means Filter was further enhanced for speed in subsequent works by Mahmoudi
and Sapiro (2005 \citep{mahmoudi2005fast}), Bilcu and Vehvilainen (2007 %
\citep{bilcu2007fast}), Karnati, Uliyar and Dey(2009 %
\citep{karnati2009fast}), and Vignesh, Oh and Kuo (2010 %
\citep{vignesh2010fast}). Other authors as Kervrann and Boulanger (2006 %
\citep{kervrann2006optimal}, 2008 \citep{kervrann2008local}), Chatterjee and
Milanfar (2008 \citep{chatterjee2008generalization}), Buades, Coll and
Morel (2006 \citep{buades2006staircasing}), Dabov, Foi, Katkovnik and
Egiazarian (2007 \citep{dabov2007image}, 2009 \citep{dabov2009bm3d}) make
the Non-Local method better. Thacker, Bromiley and Manjn (2008 %
\citep{Thacker2008AQuantitative}) investigate this basis in order to
understand the conditions required for the use of Non-Local means, testing
the theory on simulated data and MR images of the normal brain. Katkovnik, Foi, Egiazarian and Astola (2010 \citep{Katkovnik2010local}) review
the evolution of the non-parametric regression modeling in imaging from the
local Nadaraya-Watson kernel estimate to the Non-Local means and further to
transform-domain faltering based on Non-Local block-matching.

Unfortunately, the ideal implementation of Non-Local Means is
computationally expensive. Therefore, for the sake of rising the speed of
denoising, only a neighborhood of the estimated point is considered. In
practice, the similarity patches of size $7\times 7$ or $9\times 9$ and
search windows of size $19\times 19$ or $21\times 21$ are often chosen.
However these choices are empirical and the problem of optimal choice
remains open. As a consequence, the results of the numerical simulations are
not always satisfactory.

In this paper, we use the statistic estimation and optimization techniques
to give a justification of the Non-Local Means filter, and to suggest the
order of sizes of search window and similarity patch.
Our main idea is to minimize a tight upper bound of the $L^{2}$ risk
\begin{equation*}
R\left( \widetilde{f}_{w}(x_{0})\right) =\mathbb{E}\left(
\widetilde{f}_{w}(x_{0})-f(x_{0})\right) ^{2}
\end{equation*}%
by changing the width of the search window. We first obtain an explicit formula for
the optimal weights $w^{\ast }_h$ in terms of the unknown function $f.$ The
corresponding weighted mean $f_h^{\ast }$ is called "Oracle";
the "Oracle" $f_h^{\ast }$ is shown to have an optimal rate of convergence and
high performance in numerical simulations. To mimic the "Oracle" $f_h^{\ast },$
we estimate $w_h^{\ast }$ by some adaptive weights $\widehat{w}_{h}$ based on
the  observed image $Y.$ We thus obtain the Non-Local Means Filter with
the proper width of window. Numerical results show that the Non-Local Means
Filter with proper width of window outperforms the Non-Local Means Filter
with standard choice.

The paper is organized as follows. In
Section \ref{sec_main results}, we introduce the "Oracle" estimator of Non-Local Means Filter and reconstruct Non-Local Means Filter with the idea of "Oracle" theory. Our main theoretical results are presented in Section \ref{sec theory nlm}
where we give the rate of convergence of the Non-Local Means Filter. In Section \ref{Sec:simulations}, we present our simulation results
with a brief analysis. Section \ref{sec conclusion} gives the conclusion of our paper.
Proofs of the main results are deferred to Section \ref{sec proofs of mains results}.

\section{Main results \label{sec_main results}}
\subsection{Notations}
Let us set some  notations to be used
throughout the paper. The Euclidean norm of a vector $x=\left(
x_{1},...,x_{d}\right) \in \mathbf{R}^{d}$ is denoted by $%
\left\Vert x\right\Vert _{2}=\left( \sum_{i=1}^{d}x_{i}^{2}\right)^{\frac{1}{2}} .$ The
supremum norm of $x$ is denoted by $\Vert x\Vert _{\infty }=\sup_{1\leq
i\leq d}\left\vert x_{i}\right\vert .$ The cardinality of a set $\mathbf{A}$ is denoted by $\text{card}\, \mathbf{A}$. For a positive integer $N$, the uniform $N\times N$ grid of pixels on the unit square is defined
by
\begin{equation}
\mathbf{I}=\left\{ \frac{1}{N},\frac{2}{N},\cdots ,\frac{N-1}{N},1\right\}
^{2}.  \label{def I}
\end{equation}%
Each element $x$ of the grid $\mathbf{I}$ will be called pixel. The number of pixels
is $n=N^{2}.$ For any pixel $x_{0}\in \mathbf{I}$ and a given $h>0,$ the
square window of pixels
\begin{equation}
\mathbf{U}_{x_{0},h}=\left\{ x\in \mathbf{I:\;}\Vert x-x_{0}\Vert _{\infty
}\leq h\right\}  \label{def search window}
\end{equation}%
will be called \emph{search window} at $x_{0}.$ We naturally take $h$ as a multiple of $\frac{1}{N}$ ($ h=\frac{k}{N}$ for some $k\in \{ 1, 2,\cdots,N\}$). The size of the square
search window $\mathbf{U}_{x_{0},h}$ is the positive integer number
\begin{equation}
M=(2Nh+1)^{2}=\mathrm{card\ }\mathbf{U}_{x_{0},h}. \label{defi M}
\end{equation}
 For any pixel $x\in \mathbf{U}%
_{x_{0},h}$ and a given $\eta >0$, a second square window of pixels
$
\mathbf{U}_{x,\eta }
$
will be called  \emph{ patch} at $x$. Like $h$, the parameter $\eta$ is also taken as a multiple of $\frac{1}{N}$. The size of the
 patch $\mathbf{U}_{x,\eta }$ is the positive integer
\begin{equation}
m=(2N\eta+1)^2=\mathrm{card\ }\mathbf{U}_{x_{0},\eta }.
\label{defi m}
\end{equation}
 The vector $\mathbf{Y}%
_{x,\eta }=\left( Y\left( y\right) \right) _{y\in \mathbf{U}_{x,\eta }}$
formed by the values of the observed noisy image $Y$ at pixels in the
patch $\mathbf{U}_{x,\eta }$ will be called simply \emph{data patch} at $%
x\in \mathbf{U}_{x_{0},h}.$

\subsection{the "Oracle" of Non-Local means}

In order to study statistic estimation theory of the Non-Local Means
algorithm, we introduce an "Oracle" estimator (for details on this concept see
Donoho and Johnstone (1994 \citep{donoho1994ideal})) of Non-Local means.
Denote
\begin{equation}
f^{\ast}_{h}=\sum_{x\in \mathbf{U}_{x_0,h}} w^{\ast}_{h}Y(x),
\label{oracle_extimate}
\end{equation}
where
\begin{equation}
w_h^*(x)=e^{-\frac{\rho_{f,x_0}^2(x)}{H^2}} \bigg/\sum_{y\in \mathbf{U}_{x_0,h}}e^{-\frac{\rho_{f,x_0}^2(x)}{H^2}},  \quad x \in \mathbf{U}_{x_0,h},
\label{oracle_weights}
\end{equation}
\begin{equation}
\rho_{f,x_0}(x)\equiv|f(x)-f(x_0)|
\label{simulation function}
\end{equation}
  and $H>0$ is a constant. It is obvious that
\begin{equation}
\sum_{x\in \mathbf{U}_{x_0,h}} w_h^*(x)=1 \quad \text{and} \quad w_h^*(x)\geq 0.
\label{condition weights}
\end{equation}
Note that the function $\rho_{f,x_0}(x) \geq 0$ characterizes the similarity of the image brightness
at the pixel $x$ with respect to the pixel $x_{0}$, therefore we shall call $%
\rho_{f,x_0} $ similarity function. The usual bias-variance decomposition of the
Mean Squared Error (MSE)%

\begin{equation}
\begin{split}
&\mathbb{E}\left( f(x_{0})-f^*_{h}(x_{0})\right) ^{2}
\\&=
\left( \sum_{x\in
\mathbf{U}_{x_0,h}}w_h^*(x)\left( f(x)-f(x_{0})\right) \right) ^{2}+\sigma
^{2}\sum_{x\in \mathbf{U}_{x_0,h}}w_h^*(x)^{2}
\\&\leq
\left( \sum_{x\in
\mathbf{U}_{x_0,h}}w_h^*(x)\left| f(x)-f(x_{0})\right| \right) ^{2}+\sigma
^{2}\sum_{x\in \mathbf{U}_{x_0,h}}w_h^*(x)^{2}
\label{MSE}
\end{split}
\end{equation}%
The inequality (\ref{MSE}) combining with (\ref{simulation function}) implies the following upper bound
\begin{equation}
\mathbb{E}\left( f(x_{0})-f^*_{h}(x_{0})\right) ^{2}\leq g(w_h^*),
\label{upper_bound}
\end{equation}%
where
\begin{equation}
g(w)=\left( \sum_{x\in \mathbf{U}_{x_0,h}}w(x)\rho_{f,x_0} (x)\right) ^{2}+\sigma
^{2}\sum_{x\in \mathbf{U}_{x_0,h}}w(x)^{2}.  \label{function upper bound}
\end{equation}
We shall define a family of estimates by minimizing the function $g\left(
w^*_h\right) $ in $w^*_h$ and plugging the optimal weights into (\ref%
{oracle_extimate}). We shall consider the local H\"{o}lder condition
\begin{equation}
|f(x)-f(y)|\leq L\Vert x-y\Vert _{\infty }^{\beta },\,\,\,\forall x,\,y\in
\mathbf{U}_{x_{0},h+\eta},  \label{Local Lip cond}
\end{equation}%
where $\beta >0$ and $L>0$ are constants, $h>0$, $\eta>0$ and $x_{0}\in \mathbf{I}.$ The following theorem
gives the rate of convergence of the "Oracle" estimator and the proper width $h$
of the search window.

\begin{theorem}
\label{th_oracle} Assume  that $h=\left(\frac{%
\sigma^2}{4\beta L^2}\right)^{\frac{1}{2\beta+2}}n^{-\frac{1}{2\beta+2}}$
and $H > \sqrt 2 Lh^{\beta}$. Suppose that the function $f$ satisfies the local H\"{o}%
lder condition (\ref{Local Lip cond}) and  $f^*_h(x_0)$ is given by (\ref%
{oracle_extimate}). Then
\begin{equation}
\mathbb{E}\left( f^*_h(x_0)-f(x_0)\right)^2\leq \frac{2^{\frac{2\beta + 6}{2\beta +2}%
} \sigma^{\frac{4\beta}{2\beta +2}} L^{\frac{4}{2\beta +2}} }{\beta^{\frac{%
2\beta}{2\beta +2}}} n^{-\frac{2\beta}{2\beta+2}}.  \label{rate_oracle}
\end{equation}
\end{theorem}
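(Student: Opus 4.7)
The plan is to apply the upper bound (\ref{upper_bound}) and then control the bias and variance terms of $g(w^*_h)$ separately, using only the H\"{o}lder condition (\ref{Local Lip cond}) and the assumption $H > \sqrt{2}Lh^\beta$. After that, the prescribed choice of $h$ will reduce both terms to a common power of $n$, from which the stated rate will fall out by algebra.

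First I would bound the squared bias. Since $\rho_{f,x_0}(x) = |f(x)-f(x_0)| \leq L\|x-x_0\|_\infty^\beta \leq Lh^\beta$ for every $x \in \mathbf{U}_{x_0,h}$, and $(w^*_h(x))_x$ is a non-negative probability vector by (\ref{condition weights}), the convex combination $\sum_x w^*_h(x)\rho_{f,x_0}(x)$ is bounded above by $Lh^\beta$; squaring gives the bias contribution $L^2 h^{2\beta}$ to $g(w^*_h)$.

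Next I would bound the variance term $\sigma^2 \sum_x (w^*_h(x))^2$. The condition $H > \sqrt{2}Lh^\beta$ combined with $\rho_{f,x_0}(y) \leq Lh^\beta$ forces $\rho_{f,x_0}(y)^2/H^2 < 1/2$ for every $y \in \mathbf{U}_{x_0,h}$, so each exponential weight in (\ref{oracle_weights}) lies in $(e^{-1/2},1]$. Hence the denominator satisfies $\sum_y e^{-\rho_{f,x_0}(y)^2/H^2} > Me^{-1/2}$, which immediately yields $\max_x w^*_h(x) < \sqrt{e}/M$. Using $\sum_x (w^*_h(x))^2 \leq \max_x w^*_h(x)\cdot\sum_x w^*_h(x) = \max_x w^*_h(x)$, together with the elementary cardinality bound $M = (2Nh+1)^2 \geq 4nh^2$ from (\ref{defi M}), the variance is at most a constant multiple of $\sigma^2/(nh^2)$.

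Finally I would substitute the prescribed $h$, noting that $h^{2\beta+2} = \sigma^2/(4\beta L^2 n)$ makes the bias term $L^2 h^{2\beta}$ equal to $\sigma^2/(4\beta n h^2)$, so both contributions collapse into multiples of $(nh^2)^{-1}$. Writing $(nh^2)^{-1} = (4\beta L^2/\sigma^2)^{1/(\beta+1)} n^{-\beta/(\beta+1)}$ and regrouping the $\sigma$ and $L$ factors produces the advertised rate $n^{-2\beta/(2\beta+2)}$. The main obstacle will be purely in bookkeeping: the numerical constants coming from the bias bound, the $\sqrt{e}$ in the variance bound, and the factor $4^{1/(\beta+1)}$ introduced by $M \geq 4nh^2$ must be tracked carefully (and most likely loosened, e.g.\ via $\sqrt{e} \leq 2$) so that the final prefactor telescopes to the claimed $2^{(2\beta+6)/(2\beta+2)}/\beta^{2\beta/(2\beta+2)}$ rather than some slightly different but equivalent expression.
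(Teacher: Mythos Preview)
Your proposal is correct and follows the same bias--variance decomposition as the paper, but your execution of each bound is somewhat more elementary. For the bias term the paper controls the numerator and denominator of the weighted sum separately, using the monotonicity of $t\mapsto te^{-t^2/H^2}$ on $[0,H/\sqrt{2})$ together with the Taylor bound $e^{-u}\geq 1-u$ on the denominator, to arrive at $I_1\leq 4L^2h^{2\beta}$; your convex-combination argument $\sum_x w^*_h(x)\rho_{f,x_0}(x)\leq\max_x\rho_{f,x_0}(x)\leq Lh^\beta$ gives the sharper $L^2h^{2\beta}$ in one line. For the variance term the paper again bounds numerator and denominator separately to obtain $I_2\leq \sigma^2/(nh^2)$, whereas your route via $e^{-\rho^2/H^2}>e^{-1/2}$ and $\sum_x w^2\leq\max_x w$ yields $\sqrt{e}\,\sigma^2/(4nh^2)$. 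Both arguments therefore land on the same intermediate form $c_1 L^2 h^{2\beta}+c_2\sigma^2/(nh^2)$ before substituting the prescribed $h$; since your constants $(c_1,c_2)=(1,\sqrt{e}/4)$ are smaller than the paper's $(4,1)$, the final bookkeeping you flag as the ``main obstacle'' goes through with room to spare.
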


For the proof of this theorem see Section \ref{proof of th oracle}.

We confirm the theorem by simulations that the difference between the "Oracle" $%
f_{h}^{\ast }(x_{0})$ and the true value $f(x_{0})$ is extremely small (see
Table\ \ref{Tab oracle} and the definition of PSNR can be found in Section
\ref{Sec:simulations} ). The latter, at least from the practical point of
view, the theorem justifies that it is reasonable to optimize the upper bound $g(w_h^*)$
instead of optimizing the risk $\mathbb{E}\left( f_{h}^{\ast }(x_{0})-f(x_{0})\right)
^{2}$ itself.

Theorem \ref{th_oracle} displays  that the choice of a small search window,
in the place of the whole observed image, suffices to ensure a denoising
without loss of visual quality, and explains why we take a small search
window for the simulations in the Non-Local Means algorithm.

\begin{table*}[tbp]
\caption{
PSNR values when "Oracle" estimator $f_{h}^{\ast }$ is applied with
different values of $M$. }
\label{Tab oracle}
\begin{center}
\renewcommand{\arraystretch}{0.6} \vskip3mm {\fontsize{8pt}{\baselineskip}%
\selectfont
\begin{tabular}{|c|rrrrr|}
\hline
Image & Lena & Barbara & Boats & House & Peppers \\
Size & $512\times512$ & $512\times512$ & $512\times512$ & $256\times256$ & $%
256\times256$ \\ \hline\hline
$\sigma /PSNR$ & 10/28.12db & 10/28.12db & 10/28.12db & 10/28.11db &
10/28.11db \\ \hline
$9 \times 9$ & 38.98db & 37.26db & 37.66db & 38.93db & 37.85db \\
$11 \times 11$ & 40.12db & 38.49db & 38.80db & 40.04db & 38.85db \\
$13 \times 13$ & 41.09db & 39.55db & 39.78db & 40.98db & 39.64db \\
$15 \times 15$ & 41.92db & 40.45db & 40.63db & 41.77db & 40.39db \\
$17 \times 17$ & 42.64db & 41.23db & 41.39db & 42.40db & 41.00db \\
$19 \times 19$ & 43.29db & 41.93db & 42.06db & 43.06db & 41.58db \\
$21 \times 21$ & 43.88db & 42.57db & 42.67db & 43.61db & 42.14db \\
\hline\hline
$\sigma /PSNR$ & 20/22.11db & 20/22.11db & 20/22.11db & 20/28.12db &
20/28.12db \\ \hline
$9 \times 9 $ & 33.61db & 31.91db & 32.32db & 33.72db & 32.62db \\
$11 \times 11$ & 34.78db & 33.20db & 33.49db & 34.92db & 33.65db \\
$13 \times 13$ & 35.80db & 34.28db & 34.49db & 35.98db & 34.51db \\
$15 \times 15$ & 36.69db & 35.22db & 35.40db & 36.80db & 35.26db \\
$17 \times 17$ & 37.48db & 36.05db & 36.20db & 37.48db & 35.89db \\
$19 \times 19$ & 38.17db & 36.74db & 36.90db & 38.07db & 36.45db \\
$21 \times 21$ & 38.80db & 37.40db & 37.54db & 38.67db & 36.98db \\
\hline\hline
$\sigma /PSNR$ & 20/22.11db & 20/22.11db & 20/22.11db & 20/28.12db &
20/28.12db \\ \hline
$9 \times 9$ & 30.65db & 28.89db & 29.25db & 30.69db & 29.51db \\
$11 \times 11$ & 31.83db & 30.23db & 30.45db & 31.90db & 30.51db \\
$13 \times 13$ & 32.85db & 31.33db & 31.49db & 32.92db & 31.34db \\
$15 \times 15$ & 33.74db & 32.27db & 32.37db & 33.76db & 32.08db \\
$17 \times 17$ & 34.50db & 33.09db & 33.16db & 34.48db & 32.74db \\
$19 \times 19$ & 35.20db & 33.81db & 33.85db & 35.13db & 33.32db \\
$21 \times 21$ & 35.79db & 34.46db & 34.48db & 35.71db & 33.85db \\ \hline
\end{tabular}
} \vskip1mm
\end{center}
\end{table*}

\subsection{Reconstruction of Non-Local Means filter}
With the theory of "Oracle" estimator, we reconstruct the Non-Local Means filter \citep{buades2005review}. Let $h>0$ and $\eta >0$ be fixed numbers. For any $x_{0}\in \mathbf{I}$ and
any $x\in \mathbf{U}_{x_{0},h}$,  the distance between the data patches
$\mathbf{Y}_{x,\eta }=\left( Y\left( y\right) \right) _{y\in \mathbf{U}%
_{x,\eta }}$ and $\mathbf{Y}_{x_{0},\eta }=\left( Y\left( y\right) \right)
_{y\in \mathbf{U}_{x_{0},\eta }}$ is defined by
\begin{equation*}
d^{2}\left( \mathbf{Y}_{x,\eta },\mathbf{Y}_{x_{0},\eta }\right) =
\left\Vert \mathbf{Y}_{x,\eta }-\mathbf{Y}_{x_{0},\eta }\right\Vert _{2}^{2},
\end{equation*}%
where  $\left\Vert \mathbf{Y}_{x,\eta }-\mathbf{Y}_{x_{0},\eta }\right\Vert _{2}^{2}=\frac{1}{m}\sum\limits_{y\in \mathbf{U}_{x,\eta }} $ $\left( Y(T_xy)-Y(y)\right)^2$, $T_x$ is the translation mapping: $T_x y=x+(y-x_{0})$ and $m$ is given by (\ref{defi m}),
which measures the similarity between the data patches $\mathbf{Y}_{x,\eta}$ and $\mathbf{Y}_{x_0,\eta}$.
Since $%
|f(x)-f(x_{0})|^{2}=\mathbb{E}|Y(x)-Y(x_{0})|^{2}-2\sigma ^{2}$, an obvious
estimator of $\mathbb{E}\left\vert Y(x)-Y(x_{0})\right\vert ^{2}$ is given by
$
d^{2}\left( \mathbf{Y}_{x,\eta },\mathbf{Y}_{x_{0},\eta }\right)
$.
Define an estimated similarity function $\widehat{\rho }_{x_0}$ by
\begin{equation}
\widehat{\rho }_{x_0}^{2}(x)=d^{2}\left( \mathbf{Y}_{x,\eta },\mathbf{Y}_{x_{0},\eta }\right)-2\sigma ^{2},  \label{estimator similar}
\end{equation}%
and an adaptive estimator $\widehat{f}_{h }$ by
\begin{equation}
\widehat{f}_{h }(x_{0})=\sum_{x\in  \mathbf{U}_{x_{0},h}}\widehat{w}%
_{h}(x)Y(x),  \label{estimate}
\end{equation}%
where

\begin{eqnarray}
\widehat{w}_{h}
&=&
e^{-\frac{\widehat{\rho }_{x_0}^{2}(x)}{H^{2}}}\bigg/\sum_{x^{\prime }\in
\mathbf{U}_{x_{0},h}}e^{-\frac{\widehat{\rho }_{x_0}^{2}(x^{\prime })}{%
H^{2}}}
\nonumber\\&=&
e^{-\frac{d^{2}\left( \mathbf{Y}_{x,\eta },\mathbf{Y}_{x_{0},\eta }\right)}{H^{2}}}\bigg/\sum_{x^{\prime }\in
\mathbf{U}_{x_{0},h}}e^{-\frac{d^{2}\left( \mathbf{Y}_{x',\eta },\mathbf{Y}_{x_{0},\eta }\right)}{%
H^{2}}}.
\label{estimate_weights}
\end{eqnarray}
 and
  $\mathbf{U}_{x_{0},h}$ given by (\ref{def search window}).

Note that $\Delta _{x_{0},x}(y)=|f(y)-f(T-xy)|$ and $\zeta(y)=\epsilon(T-xy)-\epsilon(y)$. It is easy to see
that

\begin{eqnarray*}
\widehat{\rho }^2_{x_0}(x)
&=&
 \frac{1}{m}\sum\limits_{y\in \mathbf{U}_{x_{0},\eta }} \left( f(x)+\epsilon(x) - f(x_0)-\epsilon(x_0)\right)^2 - 2\sigma^2
\\
   &\leq&
    \frac{1}{m}\sum\limits_{y\in \mathbf{U}_{x_{0},\eta }} \left( \Delta _{x_{0},x}(y)+\zeta(y)\right)^2 - 2\sigma^2
   \\&=&
   \frac{1}{m}\sum\limits_{y\in \mathbf{U}_{x_{0},\eta }}\Delta^2 _{x_{0},x}(y)
   \\&&
   +\frac{1}{m}\sum\limits_{y\in \mathbf{U}_{x_{0},\eta }} \left(\zeta \left( y\right) ^{2}-2\sigma
^{2}+2\Delta_{x_0,x} \left( y\right) \zeta \left( y\right)  \right)
\\&=&
    \frac{1}{m}\sum\limits_{y\in \mathbf{U}_{x_{0},\eta }}\Delta^2 _{x_{0},x}(y) +\frac{1}{m}S(x),
\end{eqnarray*}
where
\begin{equation}
S(x)=\sum\limits_{y\in \mathbf{U}_{x_{0},\eta }} \left(\zeta \left( y\right) ^{2}-2\sigma
^{2}+2\Delta_{x_0,x} \left( y\right) \zeta \left( y\right)  \right).
\label{defi sx}
\end{equation}

\begin{theorem}
\label{th similar function} Assume that $\eta =c_{0}n^{-\alpha }$ $\left( \frac{(1-\beta )^{+}%
}{2\beta +2}<\alpha <\frac{1}{2}\right) $. Suppose that the function $f$ satisfies the
local H\"{o}lder conditions (\ref{Local Lip cond}) and $\widehat{\rho }_{x_0}$ is given by (\ref{estimator similar}). Then there is a constant $c_{1}$
such that
\begin{eqnarray}
&&\mathbb{P}\left( \max_{x\in \mathbf{U}_{x_{0},h}}\left\vert \widehat{\rho }_{x_0}%
^{2}(x)-\rho_{f,x_0} ^{2}(x)\right\vert \geq c_{1}n^{\alpha -\frac{1}{2}}\sqrt{\ln n}%
\right)
\nonumber\\&=&O\left( n^{-1}\right) .  \label{rate similar function}
\end{eqnarray}
\end{theorem}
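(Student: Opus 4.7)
The plan is to write $\widehat{\rho}_{x_0}^{2}(x)-\rho_{f,x_0}^{2}(x)$ as the sum of a deterministic bias and a centred stochastic fluctuation, control each separately, and finally take a union bound over the $O(n)$ pixels in the search window $\mathbf{U}_{x_0,h}$.

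For the bias, the decomposition displayed just before the theorem identifies the deterministic part as $\frac{1}{m}\sum_{y\in\mathbf{U}_{x_0,\eta}}\Delta_{x_0,x}^{2}(y)$. The local H\"older condition (\ref{Local Lip cond}) forces $\Delta_{x_0,x}(y)\leq L\|x-x_0\|_{\infty}^{\beta}\leq Lh^{\beta}$ and $\rho_{f,x_0}(x)\leq Lh^{\beta}$, so the crude pointwise bound $|\Delta_{x_0,x}^{2}(y)-\rho_{f,x_0}^{2}(x)|\leq 2L^{2}h^{2\beta}$ yields a bias of order $h^{2\beta}\asymp n^{-\beta/(\beta+1)}$. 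A short calculation shows that the lower bound $\alpha>(1-\beta)^{+}/(2\beta+2)$ is exactly what is needed to ensure $n^{-\beta/(\beta+1)}=o(n^{\alpha-1/2}\sqrt{\ln n})$; for $\beta\geq 1$ the bias is automatically smaller than the target stochastic rate.

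For the stochastic part, I split $S(x)/m=T_{1}(x)+2T_{2}(x)$, where
\[
T_{1}(x)=\frac{1}{m}\sum_{y\in\mathbf{U}_{x_0,\eta}}\bigl(\zeta(y)^{2}-2\sigma^{2}\bigr),\qquad T_{2}(x)=\frac{1}{m}\sum_{y\in\mathbf{U}_{x_0,\eta}}\Delta_{x_0,x}(y)\zeta(y)
\]
are both centred. The term $T_{2}(x)$ is a linear combination of i.i.d.\ Gaussians with variance of order $\sigma^{2}h^{2\beta}/m$, so a standard Gaussian tail bound gives $|T_{2}(x)|=O(\sigma h^{\beta}\sqrt{\ln n/m})$ with probability at least $1-O(n^{-2})$, which is even smaller than the target rate by the factor $h^{\beta}$. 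The more delicate term is $T_{1}(x)$, a quadratic form in the \emph{correlated} Gaussians $\zeta(y)=\varepsilon(T_{x}y)-\varepsilon(y)$. Writing $mT_{1}(x)=\varepsilon^{\top}A(x)\varepsilon-\mathbb{E}\bigl[\varepsilon^{\top}A(x)\varepsilon\bigr]$ for an explicit symmetric matrix $A(x)$ with entries in $\{-1,0,1,2\}$, one checks that $\|A(x)\|_{\mathrm{op}}=O(1)$ and $\|A(x)\|_{F}^{2}=O(m)$, and the Hanson--Wright inequality yields $|T_{1}(x)|=O(\sigma^{2}\sqrt{\ln n/m})=O(n^{\alpha-1/2}\sqrt{\ln n})$ with probability at least $1-O(n^{-2})$.

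Combining the deterministic bias bound with these two concentration bounds and taking a union bound over the at most $M=(2Nh+1)^{2}\leq n$ pixels in $\mathbf{U}_{x_0,h}$ yields the probability $O(n^{-1})$ stated in (\ref{rate similar function}), provided $c_{1}$ is chosen large enough to absorb all constants. The main obstacle is the quadratic-form step: one must decompose $T_{1}(x)$ as an explicit quadratic form in the i.i.d.\ noise vector $\varepsilon$ and verify that both spectral norms appearing in the Hanson--Wright bound are controlled uniformly in $x\in\mathbf{U}_{x_0,h}$. The assumption $\alpha<1/2$ guarantees $m\to\infty$, so that the rate $1/\sqrt{m}$ is meaningful.
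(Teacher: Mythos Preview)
Your proposal is correct and shares the paper's overall architecture: split $\widehat{\rho}_{x_0}^{2}(x)-\rho_{f,x_0}^{2}(x)$ into a deterministic bias (controlled via the H\"older condition and absorbed by the hypothesis $\alpha>(1-\beta)^{+}/(2\beta+2)$) plus the stochastic remainder $S(x)/m$, obtain a pointwise concentration bound for $S(x)$, and finish with a union bound over the at most $n$ pixels in $\mathbf{U}_{x_0,h}$. The difference lies in the concentration step. The paper keeps $S(x)=\sum_{y}\xi(y)$ in one piece and proves a sub-exponential bound $\mathbb{P}(|S(x)|\geq z\sqrt{m})\leq 2e^{-c_{3}z^{2}}$ by a self-contained Chernoff argument: it bounds the cumulant generating function of each summand $\xi(y)=\zeta(y)^{2}-2\sigma^{2}+2\Delta_{x_0,x}(y)\zeta(y)$ through a second-order Taylor expansion of $\psi_{y}(t)=\ln\mathbb{E}e^{t\xi(y)}$, and then takes $z\asymp\sqrt{\ln n}$. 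You instead separate $S(x)/m$ into a linear part $T_{2}$ handled by a Gaussian tail bound and a quadratic part $T_{1}$ handled by the Hanson--Wright inequality. Your route is less elementary in that it imports a named inequality rather than building the bound from scratch, but it is more careful about dependence: the paper's factorisation $\mathbb{E}e^{tS(x)}=\prod_{y}\mathbb{E}e^{t\xi(y)}$ tacitly treats the $\xi(y)$ as independent, which fails when the patches $\mathbf{U}_{x_0,\eta}$ and $\mathbf{U}_{x,\eta}$ overlap, whereas your matrix formulation of $T_{1}$ tracks these correlations explicitly through the bounds $\|A(x)\|_{\mathrm{op}}=O(1)$ and $\|A(x)\|_{F}^{2}=O(m)$. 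Both approaches deliver the same rate $n^{\alpha-1/2}\sqrt{\ln n}$.
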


For the proof of this theorem see Section \ref{proof of similar condition}.

In the Theorem  \ref{th similar
function}, we consider that $\sigma$ is a constant and the H\"{o}lder condition (\ref{Local Lip cond}) implies that
$\left\vert \frac{1}{\mathrm{card} \mathbf{U}''_{x_{0},\eta }}\right.$
$\left.\Delta _{x_{0},x}^{2}(y)-\rho_{f,x_0}
^{2}(x)\right\vert =O\left(n^{\frac{2\beta}{2\beta+2}}\right)$. Therefore, if $n $ is large enough,
 we have $%
\left\vert \frac{1}{\mathrm{card} \mathbf{U}''_{x_{0},\eta }}\Delta _{x_{0},x}^{2}(y)-\rho_{f,x_0}
^{2}(x)\right\vert \ll \sigma $. It is to say that the larger
the standard deviation of the noise is, the more useful our theorem will be.
We take the test image "Lena" as an example, which is degraded by Gaussian noise with $\sigma=10$, $\sigma=20$ and $\sigma=30$ respectively. We fix the size of search window $M=13\times13$, $H=0.4\times \sigma +2$ and choose the size of similarity patch $m\in \{2k+1: k=1,2,\cdots,20\}$. In the cases of $\sigma=20$ and $\sigma=30$,
Figure \ref{Fig evolution} (b) and (c) illustrate  that  the value of PSNR value of increases when the  size of a similarity patch
increases. The evolutions of PSNR value are in
accordance with Theorem \ref{th similar function}.
However, in the case of $\sigma=10$, Figure  \ref{Fig evolution}  (a) displays that the PSNR value increases  when the size of a similarity patch
increases in the interval $[3,15]$ and reaches the peak value. But it decreases  in
the interval $[15,41]$.  This means that  the value $\sigma =10$ is
not large enough to satisfy the condition  $\left\vert \frac{1}{\mathrm{card} \mathbf{U}''_{x_{0},\eta }}\Delta _{x_{0},x}^{2}(y)-\rho_{f,x_0}
^{2}(x)\right\vert \ll \sigma $.

\begin{figure*}[tbp]
\begin{center}
\renewcommand{\arraystretch}{0.2} \addtolength{\tabcolsep}{-6pt} \vskip3mm {%
\fontsize{8pt}{\baselineskip}\selectfont
\begin{tabular}{ccc}
\includegraphics[width=0.33\linewidth]{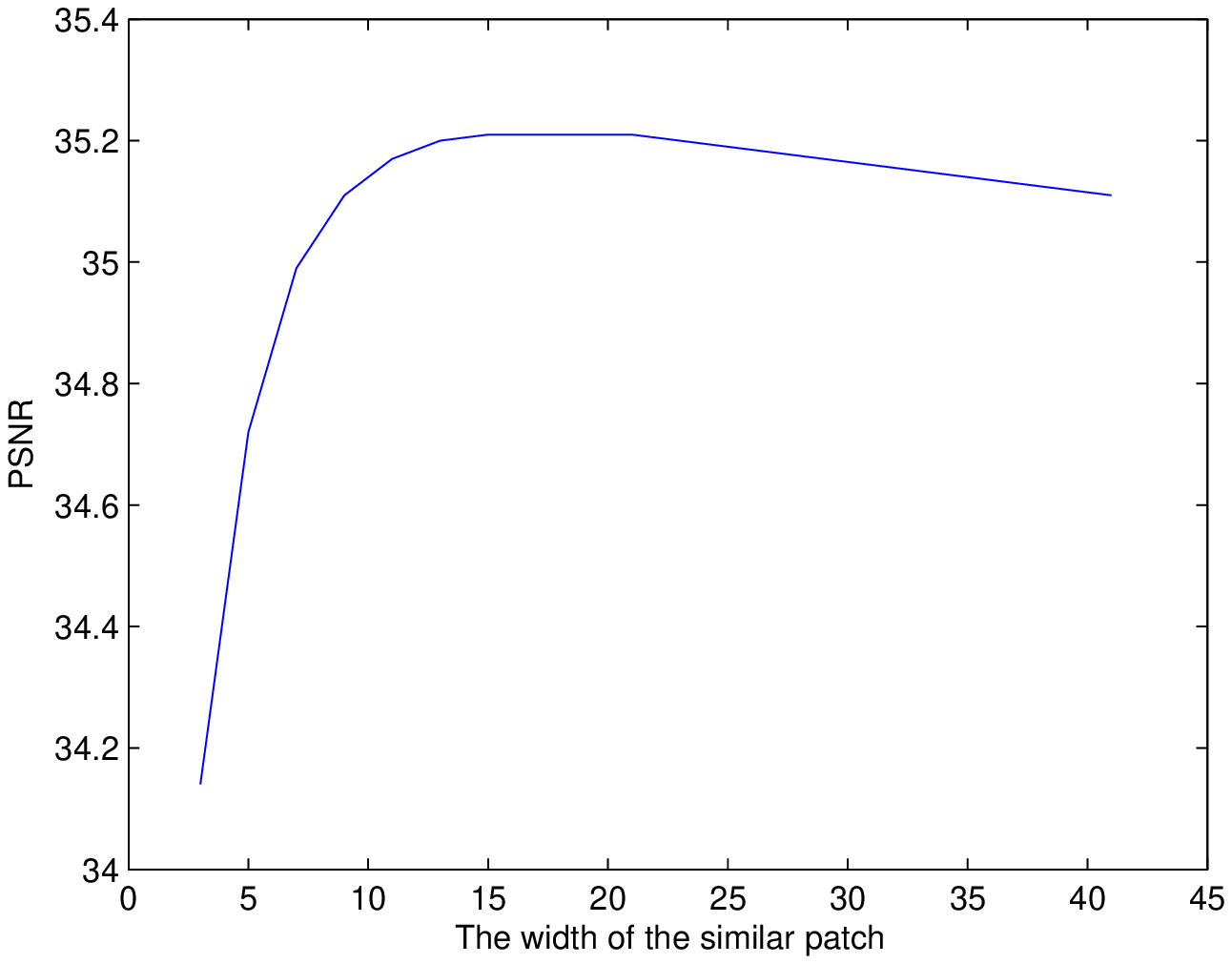} & %
\includegraphics[width=0.33\linewidth]{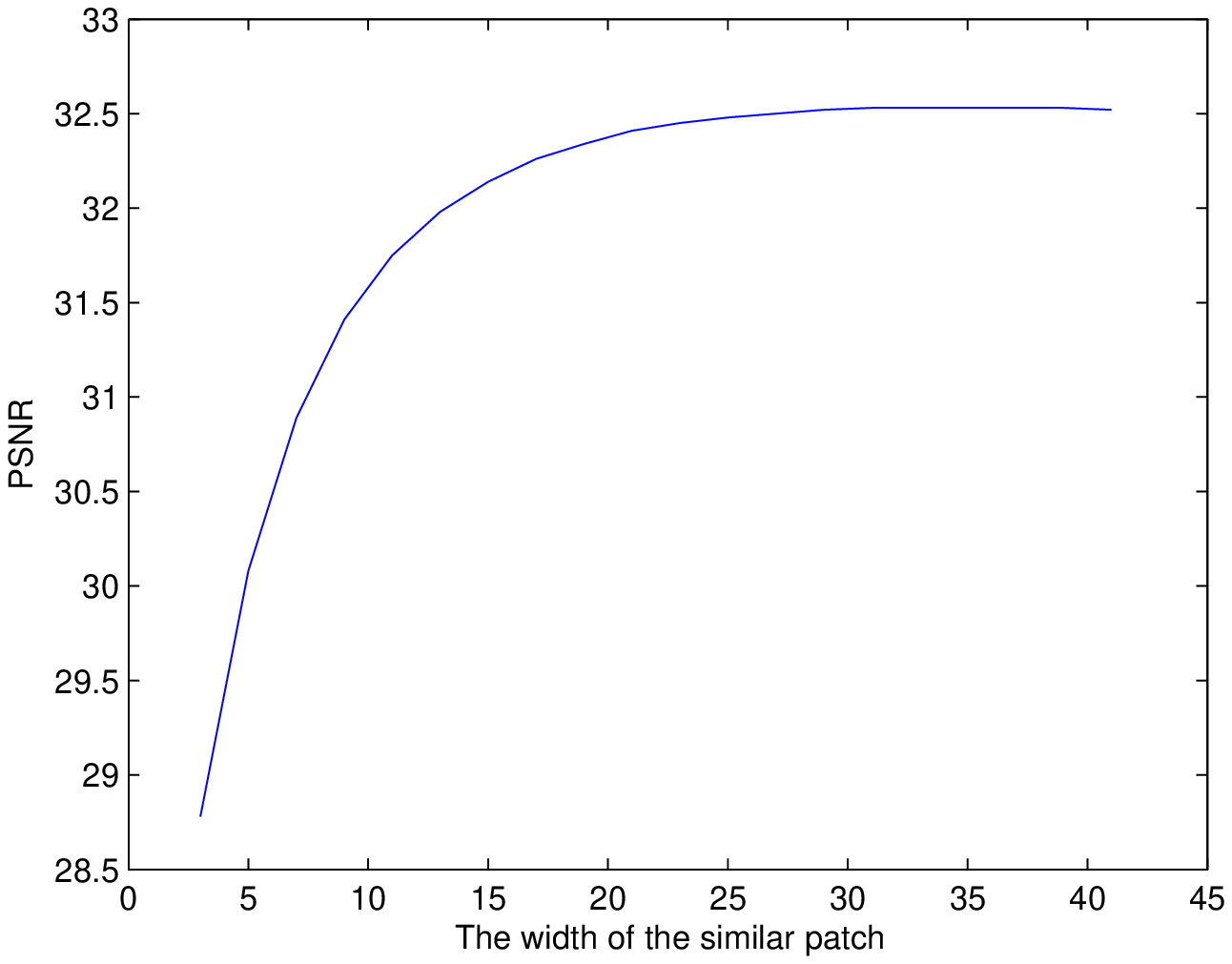} & %
\includegraphics[width=0.33\linewidth]{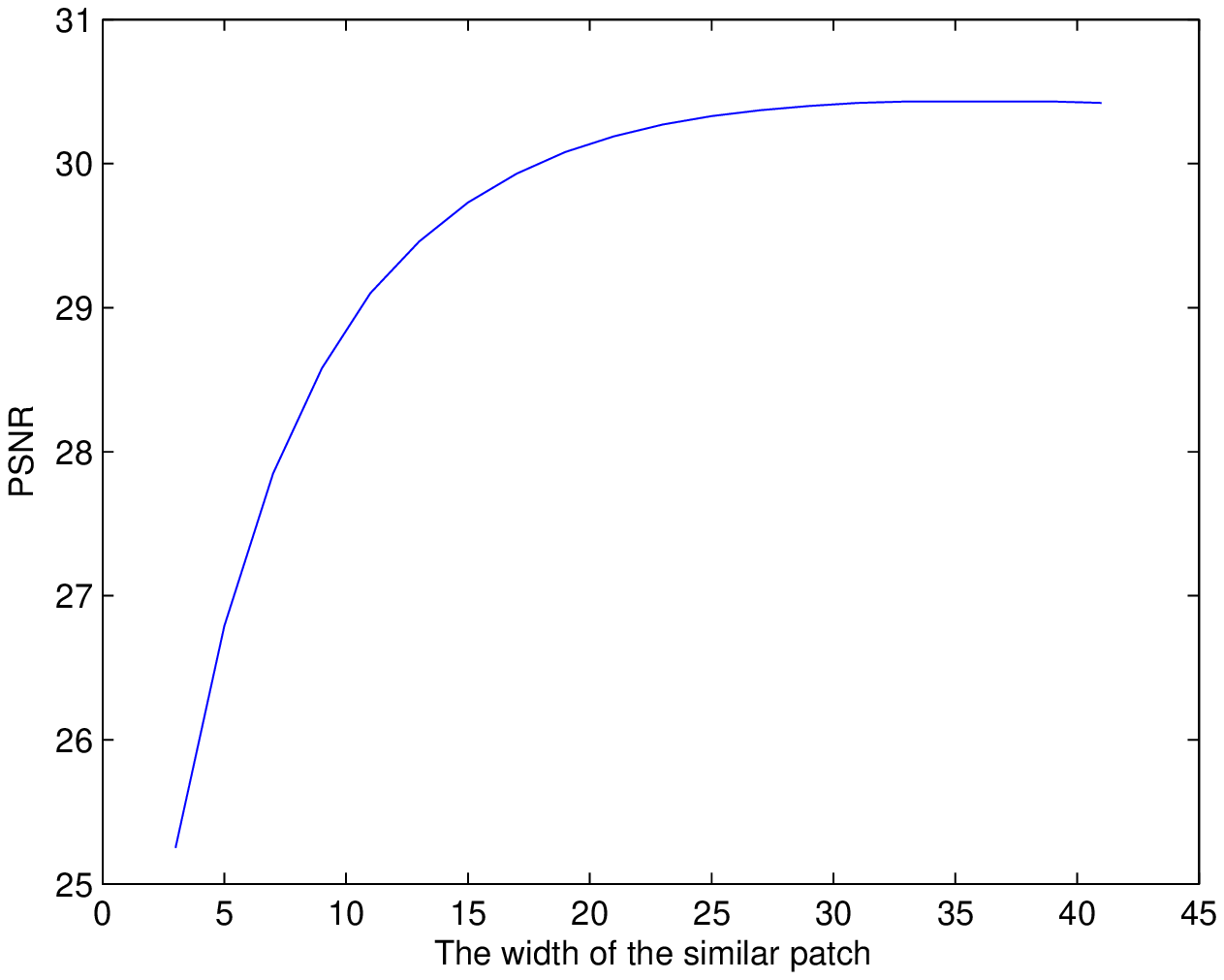} \\
(a)$\sigma=10$&(b)$\sigma=20$&(c)$\sigma=30$%
\end{tabular}
} \vskip1mm
\end{center}
\par
\rule{0pt}{-0.2pt}%
\caption{The evolution of PSNR value as a function of  the size of a similarity patch.}
\label{Fig evolution}
\end{figure*}

In order to improve the results, we sometimes shall use the smoothed version of the
estimate of brightness variation $d_\kappa^{2}\left( \mathbf{Y}_{x,\eta },\mathbf{Y}_{x_{0},\eta }\right)$ instead of the non smoothed one $d^{2}\left( \mathbf{Y}_{x,\eta },\mathbf{Y}_{x_{0},\eta }\right)$. It should be noted that for
the smoothed versions of the estimated brightness variation we can establish
similar convergence results. The smoothed estimator $d_\kappa^{2}\left( \mathbf{Y}_{x,\eta },\mathbf{Y}_{x_{0},\eta }\right)=\left\Vert \mathbf{Y}_{x,\eta }-\mathbf{Y}_{x_{0},\eta }\right\Vert _{2,K}^{2}$ is defined by%
\begin{equation}
\left\Vert \mathbf{Y}_{x,\eta }-\mathbf{Y}_{x_{0},\eta }\right\Vert _{2,\kappa}^{2}=\frac{\sum\limits_{y\in \mathbf{U}_{x_{0},\eta }}\kappa(y)\left( Y(T-xy)-Y(y)\right)^2}{\sum\limits_{y\in \mathbf{U}_{x_{0},\eta }}\kappa(y)}
\label{distance}
\end{equation}%
where $\kappa(y)$ are some weights defined on $\mathbf{U}_{x_{0},\eta }.$
With the rectangular kernel%
\begin{equation}
\kappa_{r}\left( y\right) =\left\{
\begin{array}{ll}
1, & y\in \mathbf{U}''_{x_{0},\eta }, \\
0, & \text{otherwise,}%
\end{array}%
\right.   \label{rect kernel}
\end{equation}%
we obtain exactly the distance $d^{2}\left( \mathbf{Y}_{x,\eta },\mathbf{Y}_{x_{0},\eta }\right) $. Other smoothing kernels $\kappa(y)$ used in the simulations are the Gaussian
kernel
\begin{equation}
\kappa_{g}(y)=\exp \left( -\frac{N^{2}\Vert y-x_{0}\Vert _{2}^{2}}{2h_g}\right) ,
\label{s4kg}
\end{equation}%
where $h_g$ is the bandwidth parameter, and the following kernel: for $y \in \mathbf{U}_{x_0,\eta}$,
\begin{equation}
\kappa_{0}\left( y\right) =\sum_{k=\max(1,j)}^{N\eta}\frac{1}{(2k+1)^2}
\end{equation}%
if $\|y-x_0\|_{\infty}=\frac{j}{N}$ for some $j\in \{0,1,\cdots,N\eta\}$.
 $\kappa(y)=\kappa_{0}(y)$ is used in our paper and Buades et al \citep{buades2005review}.

To avoid the undesirable border effects, we mirror the image outside the
image limits.
In more detail, we extend the image outside the image limits
symmetrically with respect to the border. At the corners, the image is
extended symmetrically with respect to the corner pixels.

The following is the algorithm for denoising used in Buades et al\citep{buades2005review}.

\noindent\rule{.48\textwidth}{.2pt}

\textbf{Algorithm}\quad NL-means (Buades et al \citep{buades2005review},\\
http://dmi.uib.es/abuades/nlmeanscode.html).

\noindent\rule{.48\textwidth}{.2pt}

Let $\{H,M,m\}$ be the parameters.

Repeat for each $x_0\in \mathbf{I}$

\quad - compute

\quad \quad $d_\kappa^{2}\left( \mathbf{Y}_{x,\eta },\mathbf{Y}_{x_{0},\eta }\right)$ (given by (\ref{distance}))

\quad \quad and $d_\kappa^{2}\left( \mathbf{Y}_{x_{0},\eta },\mathbf{Y}_{x_{0},\eta }\right)=\max\{ d_\kappa^{2}\left( \mathbf{Y}_{x,\eta },\mathbf{Y}_{x_{0},\eta }\right): x\neq x_0, x\in
\mathbf{U}_{x_0,h}\}$

\quad \quad $w(x)=\frac{\exp(-d_\kappa^{2}\left( \mathbf{Y}_{x,\eta },\mathbf{Y}_{x_{0},\eta }\right)/H^{2})}{\sum_{y\in \mathbf{U}_{x_0,h}} \exp(-d_\kappa^{2}\left( \mathbf{Y}_{y,\eta },\mathbf{Y}_{x_{0},\eta }\right)/H^{2})}$ (see the equation (\ref{estimate_weights}))

\quad \quad $\widehat{f}(x_0)=\sum_{x\in \mathbf{U}_{x_0,h}}w(x)Y(x)$

\noindent\rule{.48\textwidth}{.2pt}

 A
detailed theory analysis and the convergence of Non-Local Means Filter will be given in Section \ref{sec theory nlm}.  In Section \ref{Sec:simulations},
the numerical simulations show that we can optimize the parameters to make Non-Local Means Filter better.

\section{Convergence theorem of Non-Local means\label{sec theory nlm} }
Now, we turn to the study of the convergence of the Optimal Weights Filter. Due to the difficulty in dealing with the dependence of the weights we shall consider a slightly modified version of the proposed algorithm:  we divide  the set of pixels  into two independent parts,  so that the weights are constructed from the one part,  and  the estimation of the target function is a weighted mean  along  the other part. More precisely,  assume  that $x_{0}\in \mathbf{I},$ $h>0$ and $\eta >0.$ To prove the
convergence we split the set of pixels into two parts $\mathbf{I}=\mathbf{I}'_{x_{0}}\cup \mathbf{I}''_{x_{0}},$ where
\begin{equation*}
\mathbf{I}'_{x_{0}}=\left\{ x_{0}+\left( \frac{i}{N},\frac{j}{N}\right) \in
\mathbf{I}:i+j\text{ is pair }\right\} ,
\end{equation*}%
and $\mathbf{I}''_{x_{0}}=\mathbf{I}\diagdown \mathbf{I}'_{x_{0}}.$
Define an estimated similarity function $\widehat{\rho }'_{x_0}$ is given by
\begin{equation}
\widehat{\rho }_{x_0}^{'2}(x)=\frac{1}{\mathrm{card} \mathbf{U}''%
_{x_{0},\eta }}\sum_{y\in {\mathbf{U}''_{x_{0},\eta }}}|Y(y)-Y(T-xy)|^{2}-2%
\sigma ^{2},  \label{estimator similar}
\end{equation}%
where $\mathbf{U}_{x_{0},\eta}^{'' }=\mathbf{U}_{x_{0},h}\cap \mathbf{I}''%
_{x_{0}}$ with $\mathbf{U}_{x_{0},h}$ given by (\ref{def search window}).
Then an adaptive estimator $\widehat{f}'_{h }$ by
\begin{equation}
\widehat{f}'_{h }(x_{0})=\sum_{x\in  \mathbf{I}_{x_{0},1}}\widehat{w}%
_{h}(x)Y(x),  \label{estimate}
\end{equation}%
where \begin{equation}
\widehat{w}'_{h}=e^{-\frac{\widehat{\rho }_{x_0}^{'2}(x)}{H^{2}}}\bigg/\sum_{x^{\prime }\in
\mathbf{U}_{x_{0},h}^{\prime }}e^{-\frac{\widehat{\rho }_{x_0}^{'2}(x^{\prime })}{%
H^{2}}}.  \label{estimate_weights}
\end{equation}
 and
 $\mathbf{U}_{x_{0},h}^{\prime }=\mathbf{U}_{x_{0},h}\cap \mathbf{I}'_{x_{0}}$ with $\mathbf{U}_{x_{0},h}$ given by (\ref{def search window}).

In the next theorem we prove that the Mean Squared Error of the estimator $%
\widehat{f}'_{h}(x_{0})$ converges at the rate $n^{-\frac{2\beta }{2\beta +2}%
} $ which is the usual optimal rate of convergence for a given H\"{o}lder
smoothness $\beta >0$ (see e.g. Fan and Gijbels (1996 \citep{FanGijbels1996}%
)).

\begin{theorem}
\label{th rate estimator} Let $\eta=c_0n^{-\alpha}$, $h=\left(%
\frac{\sigma^2}{4\beta L^2}\right)^{\frac{1}{2\beta+2}}n^{-\frac{1}{2\beta+2}%
}$ and $H>2c_1n^{\alpha-\frac{1}{2}}$ and $H>\sqrt{2}Lh $. Suppose that the function f satisfies the H\"{o}%
lder condition (\ref{Local Lip cond}) and $\widehat{f}'_h$ is given by (\ref{estimate}).
Then
\begin{eqnarray*}
&&\mathbb{E} \left(\widehat{f}'_{h}(x_0)-f(x_0)\right) ^2
\\&\leq &2\left(\frac{2^{%
\frac{2\beta + 6}{2\beta +2}} \sigma^{\frac{4\beta}{2\beta +2}} L^{\frac{4}{%
2\beta +2}} }{\beta^{\frac{2\beta}{2\beta +2}}} \right)\left( \frac{1+\frac{%
2c_1n^{\alpha-\frac{1}{2}}}{H}}{1-\frac{c_1n^{\alpha-\frac{1}{2}}}{H}}%
\right) ^2n^{-\frac{2\beta}{2\beta+2}}.  \label{rate etimator}
\end{eqnarray*}
\end{theorem}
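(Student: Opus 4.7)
The plan is to deduce the bound from Theorem~\ref{th_oracle} (the oracle bound) combined with Theorem~\ref{th similar function} (concentration of $\widehat\rho'_{x_0}$ around $\rho_{f,x_0}$), the bridge between the two being provided by the parity splitting $\mathbf I=\mathbf I'_{x_0}\cup\mathbf I''_{x_0}$. The key structural remark is an \emph{independence} property: for $x\in\mathbf U_{x_0,h}'\subset\mathbf I'_{x_0}$ and $y\in\mathbf U_{x_0,\eta}''\subset\mathbf I''_{x_0}$, a parity check on the coordinates of $T_x y=x_0+(x-x_0)+(y-x_0)$ shows $T_x y\in\mathbf I''_{x_0}$, so $\widehat\rho^{\prime 2}_{x_0}(x)$ and therefore every weight $\widehat w'_h(x)$ is measurable with respect to $\mathcal F''=\sigma(\{Y(z):z\in\mathbf I''_{x_0}\})$, whereas the $Y(x)$ used in the weighted average of (\ref{estimate}) live in $\mathbf I'_{x_0}$ and are independent of $\mathcal F''$.

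First, I would condition on $\mathcal F''$. The weights then become deterministic and the bias/variance computation in (\ref{MSE})–(\ref{upper_bound}) (applied with $\widehat w'_h$ in place of $w^*_h$) gives
\begin{equation*}
\mathbb E\!\left[\bigl(\widehat f'_h(x_0)-f(x_0)\bigr)^{2}\,\bigm|\,\mathcal F''\right]\le g(\widehat w'_h),
\end{equation*}
so $\mathbb E\bigl(\widehat f'_h(x_0)-f(x_0)\bigr)^{2}\le \mathbb E[g(\widehat w'_h)]$. Next I introduce the event $A$ furnished by Theorem~\ref{th similar function} applied to $\widehat\rho'_{x_0}$ (whose proof is unchanged by the restriction to the sublattice $\mathbf U_{x_0,\eta}''$): on $A$, $\max_{x\in\mathbf U_{x_0,h}}|\widehat\rho^{\prime 2}_{x_0}(x)-\rho_{f,x_0}^{2}(x)|\le \delta_n$ with $\delta_n$ of order $c_1 n^{\alpha-1/2}$, and $\mathbb P(A^c)=O(n^{-1})$.

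On $A$, inserting these two-sided inequalities into both the numerator and denominator of the softmax defining $\widehat w'_h(x)$, while using $H>2c_1 n^{\alpha-1/2}$ to keep the denominator strictly positive, yields a uniform sandwich $\lambda_-\, w^*_h(x)\le \widehat w'_h(x)\le \lambda_+\, w^*_h(x)$ with $\lambda_+=1+2c_1 n^{\alpha-1/2}/H$ and $\lambda_-=1-c_1 n^{\alpha-1/2}/H$ after first-order bookkeeping of the exponentials. Since $g(w)$ is a sum of squares of non-negative linear functionals of $w$, this propagates to $g(\widehat w'_h)\le (\lambda_+/\lambda_-)^2\, g(w^*_h)$ on $A$. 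I then invoke Theorem~\ref{th_oracle}, whose second hypothesis $H>\sqrt 2 L h^\beta$ is used exactly here, to bound $g(w^*_h)$ by the stated oracle constant times $n^{-2\beta/(2\beta+2)}$. Off $A$, a Cauchy--Schwarz estimate $\mathbb E[g(\widehat w'_h)\mathbf 1_{A^c}]\le (\mathbb E[g(\widehat w'_h)^2])^{1/2}\sqrt{\mathbb P(A^c)}$, combined with $\mathbb P(A^c)=O(n^{-1})$ and the Gaussian tails of $\varepsilon$, contributes only $O(n^{-1/2})$, which is of strictly smaller order than $n^{-2\beta/(2\beta+2)}$ for the relevant range of $\beta$ and is absorbed by the factor $2$ in front of the final bound.

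The main obstacle is the uniform sandwich on $A$: obtaining the precise multiplier $(\lambda_+/\lambda_-)^2$, rather than a coarser exponential bound of the form $e^{4\delta_n/H^2}$, requires treating the numerator and denominator perturbations of the softmax separately, and then reconciling the resulting exponents with the factor $H$ (rather than $H^2$) that appears in the statement. The condition $H>2c_1 n^{\alpha-1/2}$ is precisely what forces $\lambda_->0$, so that $g(\widehat w'_h)/g(w^*_h)$ is uniformly controlled on the good event.
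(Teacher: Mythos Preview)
Your proposal is essentially the paper's own argument: condition on $\mathcal F''$ to reduce to the deterministic-weight bound $g'(\widehat w'_h)$, define the good event via Theorem~\ref{th similar function}, sandwich $\widehat w'_h$ by $w^*_h$ through the softmax, and absorb the complement. Two small points where the paper differs. First, on $A^c$ the paper does not use Cauchy--Schwarz or Gaussian tails: since the weights are a probability vector and $\rho_{f,x_0}(x)\le L(h+\eta)^\beta$ by the H\"older condition, one has the deterministic bound $g'(\widehat w'_h)\le c_2$ uniformly, giving an $O(n^{-1})$ (not $O(n^{-1/2})$) contribution directly. Second, because the average in (\ref{estimate}) runs over $\mathbf U'_{x_0,h}$, the relevant functional is $g'$ (sum over $\mathbf U'_{x_0,h}$), and the oracle bound of Theorem~\ref{th_oracle} must be re-derived on the half-lattice; the paper notes this costs only a constant factor, which is why the final statement carries the leading $2$. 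Your observation about $H$ versus $H^2$ is well taken: the paper's own computation produces $H^2$ in the multiplier, so the discrepancy with the theorem statement is in the paper, not in your argument.
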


For the proof of this theorem see Section \ref{proof of similar condition}.

\begin{table*}[tbp]
\caption{
Comparison between the Non-Local Means Filter with Baude's
parameters and our parameters. }
\label{tab psnr nlmeans}
\begin{center}
\renewcommand{\arraystretch}{0.6} \vskip3mm {\fontsize{8pt}{\baselineskip}%
\selectfont
\begin{tabular}{|l|rrrrr|}
\hline
Image & Lena & Barbara & Boats & House & Peppers \\
Size & $512\times512$ & $512\times512$ & $512\times512$ & $256\times256$ & $%
256\times256$ \\ \hline\hline
$\sigma/PSNR$ & 10/28.12db & 10/28.12db & 10/28.12db & 10/28.11db &
10/28.11db \\ \hline
PSNR/Buade & 34.99db & 33.82db & 32.85db & 35.50db & 33.13db \\
PSNR/Ours & 35.22db & 33.55db & 33.00db & 35.35db & 33.16db \\
$\Delta$PSNR & 0.23db & -0.27db & 0.15db & -0.15db & 0.03db \\ \hline\hline
$\sigma/PSNR$ & 20/22.11db & 20/22.11db & 20/22.11db & 20/28.12db &
20/28.12db \\ \hline
PSNR/Buade &  31.51db & 30.38db & 29.32db & 32.51db & 29.73db \\
PSNR/Ours & 32.39db & 30.62db & 30.02db & 32.57db & 30.30db \\
$\Delta$PSNR & 0.82db & 0.24db & 0.70db & 0.08db & 0.57db \\ \hline\hline
$\sigma/PSNR$ & 30/18.60db & 30/18.60db & 30/18.60db & 30/18.61db &
30/18.61db \\ \hline
PSNR/Buade & 28.86db & 27.65db & 27.38db & 29.17db & 27.67db \\
PSNR/Ours & 30.20db & 28.06db & 28.60db & 30.49db & 28.28db \\
$\Delta$PSNR & 1.34db & 0.41db & 1.22db & 1.32db & 0.61db \\ \hline
\end{tabular}
} \vskip1mm
\end{center}
\end{table*}

\begin{table*}[tbp]
\caption{
Performance of denoising algorithms when applied to test noisy (WGN) images. }
\label{Table compar}
\begin{center}
\renewcommand{\arraystretch}{0.6} \vskip3mm {\fontsize{8pt}{\baselineskip}%
\selectfont
\begin{tabular}{|c|l|ccccc|}
\hline
            &Images & Lena & Barbara & Boat & House & Peppers \\
            &Sizes & $512 \times 512$ & $512 \times 512$ & $512 \times 512$ & $256 \times
            256$ & $256 \times 256$ \\ \hline\hline
$\sigma$    &         Method  & PSNR    &  PSNR   &  PSNR   &  PSNR   &  PSNR   \\ \hline
                       &Non-Local Means &&&&&\\
                        &$M=13\times13$
            & 32.39db & 30.62db & 30.02db & 32.57db & 30.30db \\
                        &$m=21\times21$&&&&&\\ \cline{2-7}
                        &Buades et al\citep{Bu}
            & 31.51db & 30.38db & 29.32db & 32.51db & 29.73db \\
                        &Salmon et al \citep{salmon2009nl}
            & -       &      -  & -       & -       & 29.46db \\
                        &Katkovnik et al \citep{katkovnik2004directional}
            & 30.74db & 27.38db & 29.03db & 31.24db & 29.58db \\
$20$                    &Foi et al \citep{foinovel}
            & 31.43db & 27.90db & 39.61db & 31.84db & 30.30db \\
                        &Roth et al \citep{roth2009fields}
            & 31.89db & 28.28db & 29.86db & 32.29db & 30.47db \\
                        &Hirkawa et al \citep{hirakawa2006image}
            & 32.69db & 31.06db & 30.25db & 32.58db & 30.21db \\
                        &Kervrann et al \citep{kervrann2008local}
            & 32.64db & 30.37db & 30.12db & 32.90db & 30.59db \\
                        &Jin et al \citep{jin2011removing}
            & 32.68db & 31.04db & 30.30db & 32.83db & 30.61db \\
                        &Hammond et al \citep{hammond2008image}
            & 32.81db & 30.76db & 30.41db & 32.52db & 30.40db \\
                        &Aharon et al \citep{aharon2006rm}
            & 32.39db & 30.84db & 30.39db & 33.10db & 30.80db \\
                        &Dabov et al \citep{dabov2007image}
            & 33.05db & 31.78db & 30.88db & 33.77db & 31.29db \\
            \hline
\end{tabular}
} \vskip1mm
\end{center}
\end{table*}

\section{\label{Sec:simulations}Simulation }

In this section, we compare the performance of the Non-Local Means Filter
computed using the parameters proposed in this paper with those proposed in
Buades et al  \citep{buades2005review}. The results were measured by the usual Peak
Signal-to-Noise Ratio (PSNR) in decibels (db) defined as%
\begin{equation*}
PSNR=10\log _{10}\frac{255^{2}}{MSE},
\end{equation*}
\begin{equation*}
MSE=\frac{1}{\mathrm{card} \mathbf{I}}%
\sum\limits_{x\in \mathbf{I}}(f(x)-\widehat{f}_h(x))^{2},
\end{equation*}%
where $f$ is the original image and $\widehat{f}_h$ is the estimated one.

We have done simulations on a commonly-used set of images available at
http://decsai.\newline
ugr.es/javier/denoise/test images/. The potential of the estimation method
is illustrated with the $512\times 512$ "Lena" image (Figure\ \ref{fig4}(a))
corrupted by an additive white Gaussian noise (Figure\ \ref{fig4}(a) right,
PSNR$=22.10db$, $\sigma =20$). We have seen experimentally that the
filtering parameter $H$ can take values between $0.4\times \sigma +2$ and $%
0.5\times \sigma +2$, obtaining a high visual quality solution. Theorem \ref%
{th rate estimator} implies that the search window is of size $c_{0}\sigma ^{%
\frac{2}{2\beta +2}}$. Assuming that $\beta =1$, we get a search window of
size $c_{0}\sqrt{\sigma }$. Experimentations show that when the size of the
search window takes values $1.5\times \sqrt{\sigma }+4.5$, we obtain  the
best quality for Non- Local Means Filter. Our simulations also show that it
is convenient to take the similarity patch size as $m=17\times 17$ for $%
\sigma =10,$ and $m=21\times 21$ for $\sigma =20$ and $\sigma =30$. In
Figure\ \ref{fig4}(b) left, we can see that the noise is reduced in a
natural manner and significant geometric features, fine textures, and
original contrasts are visually well recovered with no undesirable artifacts
(PSNR$=32.39db$). To better appreciate the accuracy of the restoration
process, the square of difference between the original image and the
recovered image is shown in Figure\ \ref{fig4}(b) right, where dark values
correspond to high-confidence estimates. As expected, pixels with a low
level of confidence are located in the neighborhood of image
discontinuities. For comparison we give the image denoised by the Non-Local
Means Filter with $21\times 21$ search windows and $9\times 9$ similarity
patches (PSNR$=31.51db$) and its square error, given in Figure\ \ref{fig4}
(c). The overall visual impression and the numerical results are improved
using our theory.

In Table \ref{tab psnr nlmeans}, we show a comparison of PSNR values of
Non-Local Means Filter computed with parameters propose in Buades et al \citep{buades2005review}
and with those proposed in our paper. It is easy to see that the visual
quality rises noticeably as the standard deviation $\sigma $ increases.
Nothing improves in the visual quality for $\sigma =10$, but it improves
with average $0.50db$ for $\sigma =20$ and average $0.98db$ for $\sigma =30$.
The comparison with several  filters is given in Table \ref{Table compar}. The  PSNR values show that the Non-Local Means Filter with proper parameters  is as good as more sophisticated methods, like  \citep{hirakawa2006image,kervrann2008local,hammond2008image,aharon2006rm}, and is better than  the filters proposed in  \citep{salmon2009nl,katkovnik2004directional,foinovel,roth2009fields,hirakawa2006image}.  The proposed approach  gives a denoising quality   which is competitive with that of the recent method BM3D \citep{dabov2007image}.

\begin{figure*}[tbp]
\begin{center}
\renewcommand{\arraystretch}{0.2} \addtolength{\tabcolsep}{-6pt} \vskip3mm {%
\fontsize{8pt}{\baselineskip}\selectfont
\begin{tabular}{cc}
\includegraphics[width=0.43\linewidth]{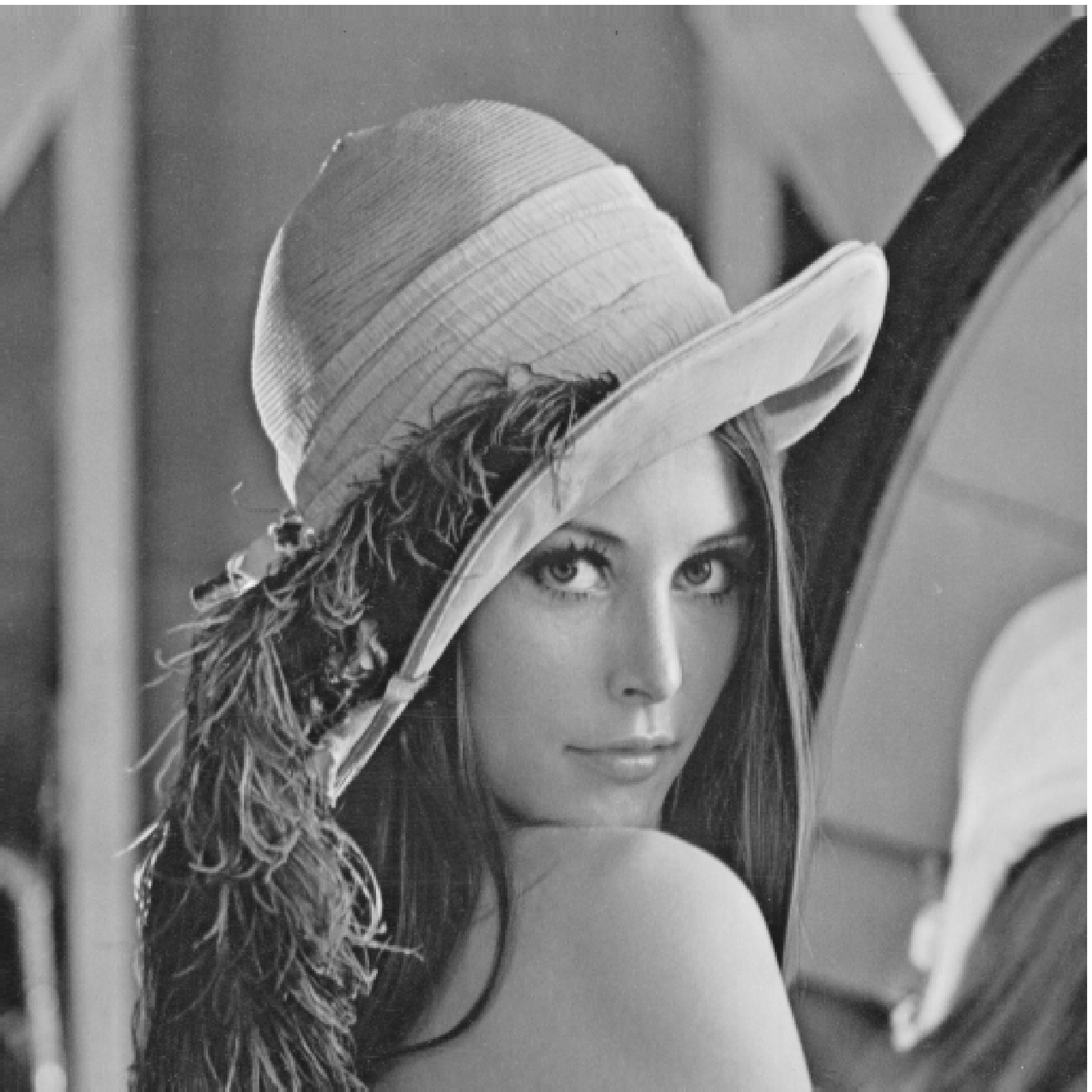} & %
\includegraphics[width=0.43\linewidth]{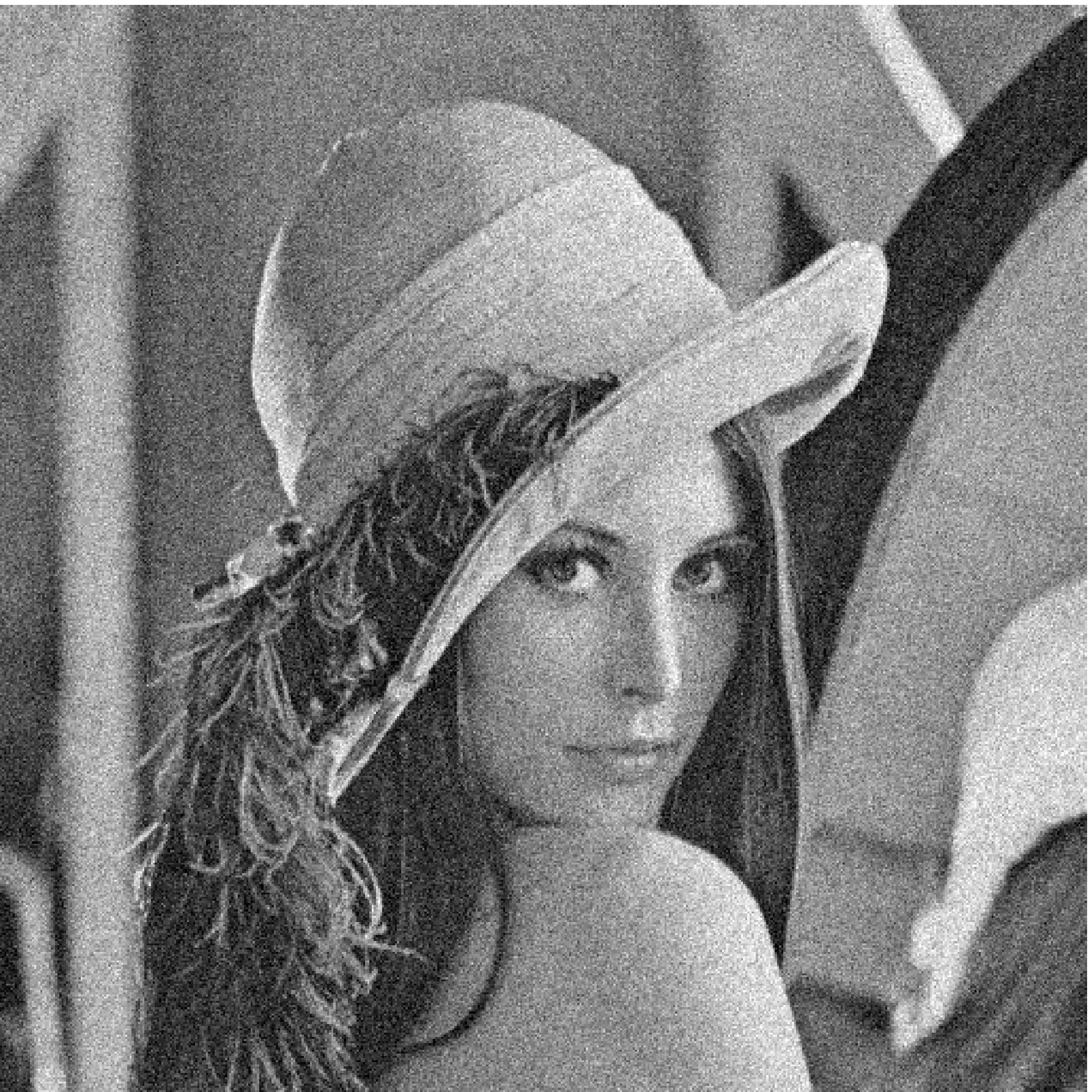} \\
\multicolumn{2}{c}{(a) Original $512\times 512$ image and noisy image with $%
\sigma =20$ ($22.11db$)} \\
\includegraphics[width=0.43\linewidth]{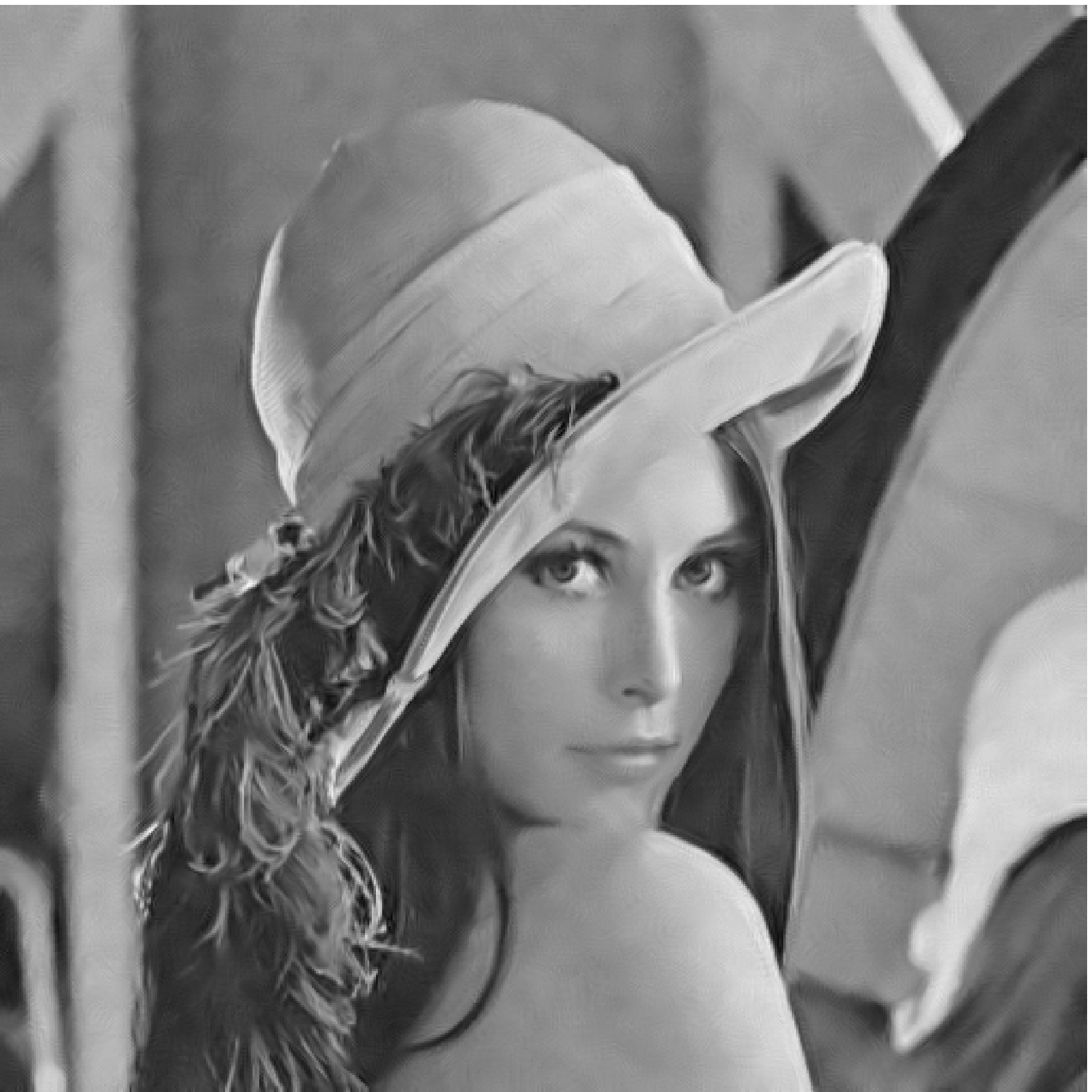} & %
\includegraphics[width=0.43\linewidth]{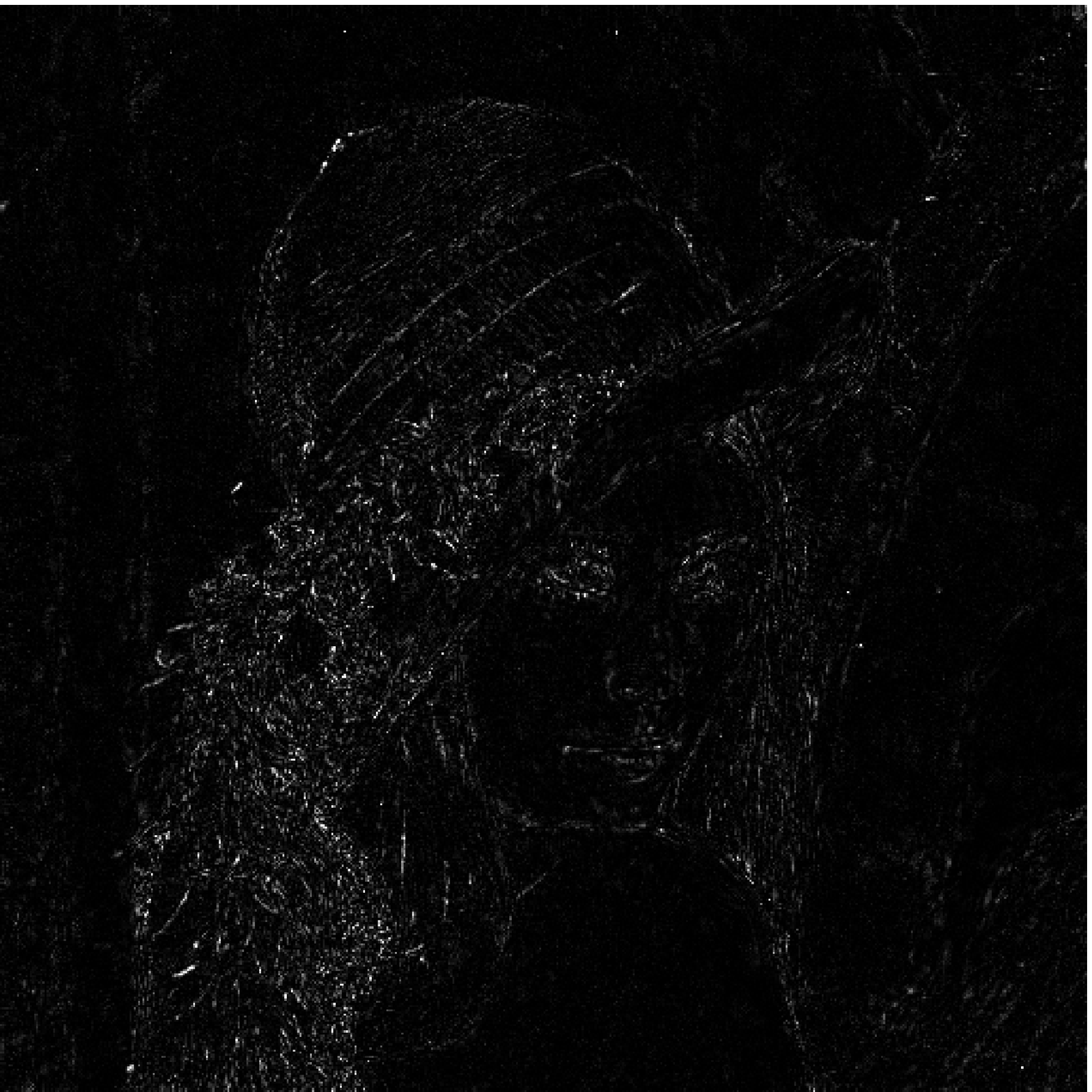} \\
\multicolumn{2}{c}{(b) Image denoised with our parameters ($32.39db$) and
its square error} \\
\includegraphics[width=0.43\linewidth]{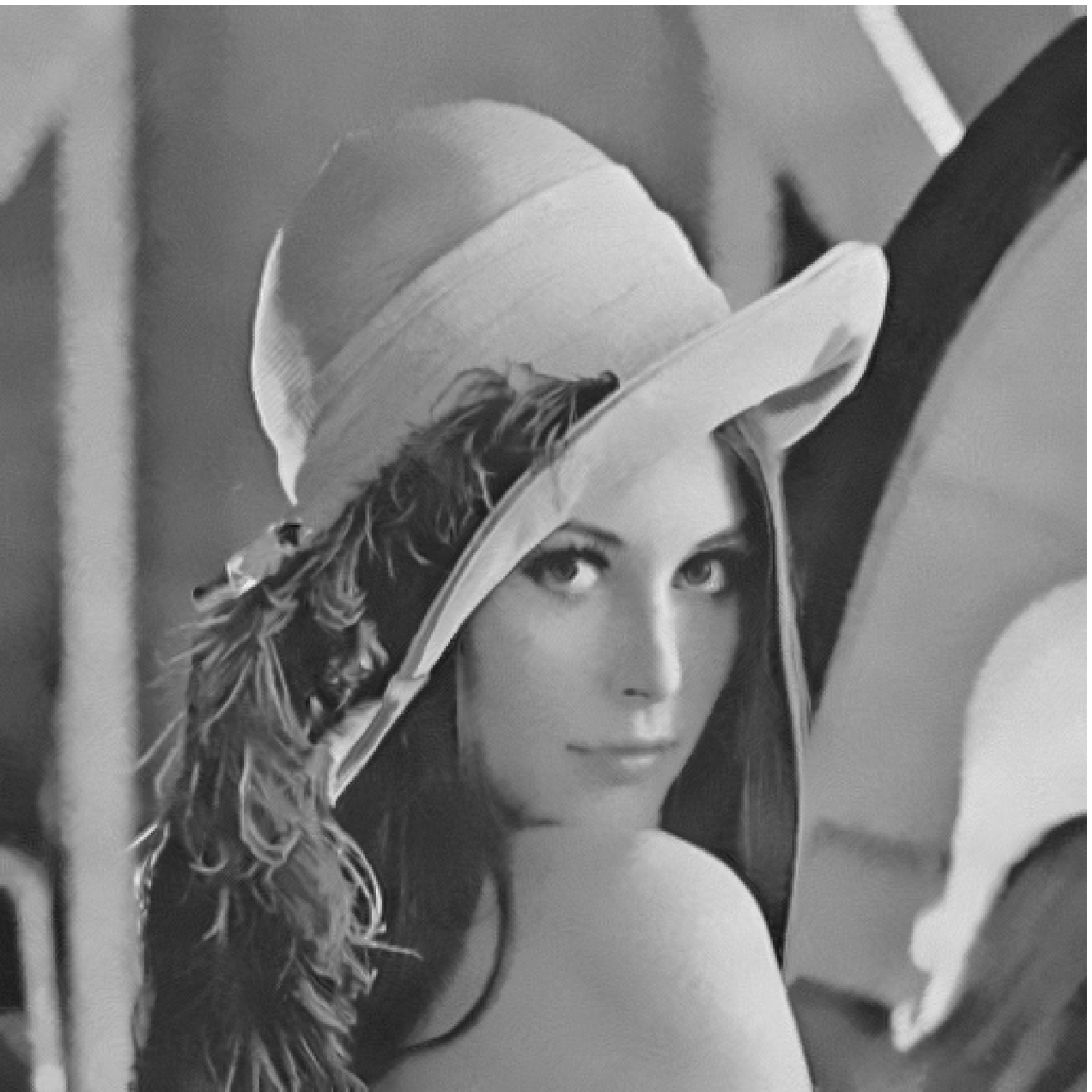} & %
\includegraphics[width=0.43\linewidth]{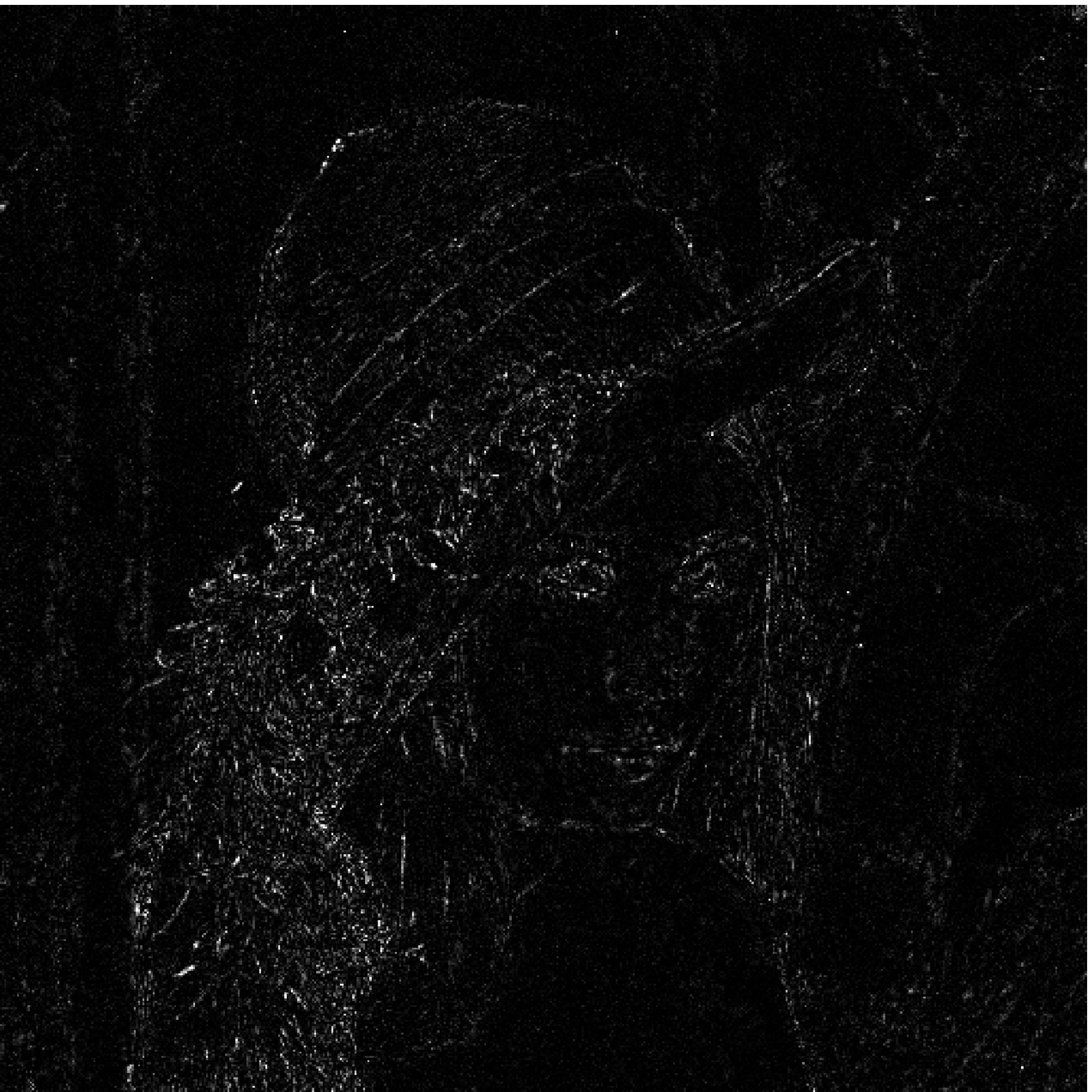} \\
\multicolumn{2}{c}{(c) Image denoised with Buade's parameters ($31.51db$)
and its square error}%
\end{tabular}
} \vskip1mm
\end{center}
\par
\rule{0pt}{-0.2pt}%
\caption{Results of denoising "Lena" $512\times 512$ image.}
\label{fig4}
\end{figure*}

\section{Conclusion \label{sec conclusion}}

We have proposed  new theorems of Non-Local Means Filter, based on optimization of parameters in the weighted means approach.
Our analysis shows that a small search window is preferred rather than the whole image and  a large similarity patch ($m=21\times 21$) is also preferred rather than the small similarity patch ($m=7\times 7$).
 The proposed theorems improve the usual parameters of  Non-Local Means Filter both numerically and visually in denoising performance.
We hope that the
convergence theorems for the Non-Local Means Filter that we deduced can also bring similar improvements for recently developed
algorithms where the basic idea of the Non-Local means filter is used.

\section{\label{sec proofs of mains results} Proofs of the main results}

\subsection{\label{proof of th oracle}Proof of Theorem \protect\ref%
{th_oracle}}

Denoting for brevity
\begin{eqnarray}
I_1
&=&
\left( \sum_{x\in \mathbf{I}}w^*_h(x)\rho_{f,x_0}(x)\right) ^{2}
\nonumber\\&=&
\left(\frac{%
\displaystyle \sum_{\|x-x_0\|_{\infty}\leq h} e^{-\frac{\rho_{f,x_0}^2(x)}{H^2} }{%
\rho}(x)}{\displaystyle \sum_{\|x-x_0\|_{\infty}\leq h}e^{-\frac{\rho_{f,x_0}^2(x)}{%
H^2} }}\right)^2,
\label{def I1}
\end{eqnarray}
and
\begin{eqnarray}
I_2
&=&
\sigma^2 \sum_{x\in \mathbf{I}}\left(w^*_h(x)\right)^2
\\&=&
 \sigma^2\frac{
\displaystyle \sum_{\|x-x_0\|_{\infty}\leq h} e^{-2\frac{\rho_{f,x_0}^2(x)}{H^2} }}{%
\left(\displaystyle \sum_{\|x-x_0\|_{\infty}\leq h}e^{-\frac{\rho_{f,x_0}^2(x)}{H^2}
}\right)^2},
\label{def I2}
\end{eqnarray}
then we have
\begin{equation}
g(w^*_h(w))=I_1+I_2.  \label{MSE bound}
\end{equation}
Noting that $te^{-\frac{t^2}{H^2}}$, $t\in [0, H/\sqrt 2)$ is increasing,
it is easy to see that
\begin{equation}
\begin{split}
&\sum_{\|x-x_0\|_{\infty}\leq h} e^{-\frac{
L^2\|x-x_0\|_{\infty}^{2\beta}}{H^2} }L\|x-x_0\|^{\beta}_{\infty}
\\&\leq
\sum_{\|x-x_0\|_{\infty}\leq h} L\|x-x_0\|^{\beta}_{\infty}e^{-\frac{
L^2\|x-x_0\|_{\infty}^{2\beta}}{H^2}}
\\ & \leq
\sum_{\|x-x_0\|_{\infty}\leq h}
L\|x-x_0\|^{\beta}_{\infty}\leq 4Lh^{\beta+2}n.
\end{split}
\label{inequ Lx}
\end{equation}

Since $e^{-\frac{t^2}{H^2}}$,
$t\in [0,H/\sqrt 2)$ is decreasing, Using one term Taylor expansion,
\begin{equation}
\begin{split}
&\sum_{\|x-x_0\|_{\infty}\leq h}e^{-\frac{L^2\|x-x_0\|_{
\infty}^{2\beta}}{H^2} }
\\&\geq
\sum_{\|x-x_0\|_{\infty}\leq h}e^{-\frac{L^2\|x-x_0\|_{\infty}^{2\beta}}{H^2}%
}
\\&\geq \sum_{\|x-x_0\|_{\infty}\leq h} \left(1-\frac{L^2\|x-x_0\|_{
\infty}^{2\beta}}{H^2}\right)\geq 2h^2n.
 \label{weight taylor}
 \end{split}
\end{equation}
The above three inequalities (\ref{def I1}), (\ref{weight taylor}) and (\ref{inequ Lx}) imply that
\begin{equation}
I_1 \leq 4L^2h^{2\beta}.  \label{I1 bound}
\end{equation}
Taking into account the inequality
\begin{equation*}
 \sum_{\|x-x_0\|_{\infty}\leq h} e^{-2\frac{\rho_{f,x_0}^2(x)}{H^2} }
 \leq  \sum_{\|x-x_0\|_{\infty}\leq h}1=4h^2n,
\end{equation*}
(\ref{def I2}) and (\ref{weight taylor}), it is easily seen that
\begin{equation}
I_2\leq \frac{\sigma^2}{h^2n}.  \label{I2 bound}
\end{equation}
Combining (\ref{MSE bound}), (\ref{I1 bound}), and (\ref{I2 bound}), we give
\begin{equation}
g(w_h^*)\leq 4L^2h^{2\beta}+\frac{\sigma^2}{h^2n}.
\label{upper bound ineq}
\end{equation}
Let $h$ minimize the latter term of the above inequality. Then
\begin{equation*}
8\beta L ^2 h^{2\beta-1}-\frac{2\sigma^2}{h^3n}=0
\end{equation*}
from which we infer that
\begin{equation}
h=\left(\frac{\sigma^2}{4\beta L^2}\right)^{\frac{1}{2\beta+2}}n^{-\frac{1}{%
2\beta+2}}.  \label{h value}
\end{equation}
Substituting (\ref{h value}) to (\ref{upper bound ineq}) leads to
\begin{equation*}
g(w_h^*)\leq \frac{2^{\frac{2\beta + 6}{2\beta +2}} \sigma^{\frac{4\beta}{%
2\beta +2}} L^{\frac{4}{2\beta +2}} }{\beta^{\frac{2\beta}{2\beta +2}}} n^{-%
\frac{2\beta}{2\beta+2}}.
\end{equation*}
Therefore (\ref{upper_bound}) implies (\ref{rate_oracle}).

\subsection{\label{proof of similar condition} Proof of Theorem \protect\ref%
{th similar function}}

First, we prove the following lemma:

\begin{lemma}
\label{Lemma s} Suppose that $S(x)$ is given by (\ref{defi sx}), then there are two constants $c_{2}$ and $c_{3}$, such that for
any $0\leq z\leq c_{2}m^{1/2},$
\begin{equation*}
\mathbb{P}\left( \left\vert S(x)\right\vert \geq z\sqrt{m}\right) \leq 2\exp \left(
-c_{3}z^{2}\right) .
\end{equation*}
\end{lemma}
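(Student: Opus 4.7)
\subsection*{Proof proposal for Lemma \ref{Lemma s}}

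The plan is to split $S(x)$ into its quadratic and linear parts and handle them separately, then combine via a union bound. Writing
\begin{equation*}
S(x)=S_1(x)+S_2(x),\qquad S_1(x)=\sum_{y\in \mathbf{U}_{x_0,\eta}}\!\bigl(\zeta(y)^2-2\sigma^2\bigr),\qquad S_2(x)=2\sum_{y\in \mathbf{U}_{x_0,\eta}}\!\Delta_{x_0,x}(y)\,\zeta(y),
\end{equation*}
and using $\mathbb{P}(|S(x)|\geq z\sqrt m)\leq \mathbb{P}(|S_1(x)|\geq \tfrac12 z\sqrt m)+\mathbb{P}(|S_2(x)|\geq \tfrac12 z\sqrt m)$, it suffices to establish a sub-Gaussian tail for each piece (in the restricted range $0\leq z\leq c_2 m^{1/2}$).

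For the linear term $S_2(x)$, I would use that $\zeta(y)=\varepsilon(T_{-x}y)-\varepsilon(y)$ is a linear functional of the i.i.d.\ Gaussian noise $\varepsilon$, hence $S_2(x)$ is itself a centered Gaussian random variable. Its variance is $4\sum_{y,y'}\Delta_{x_0,x}(y)\Delta_{x_0,x}(y')\,\operatorname{Cov}(\zeta(y),\zeta(y'))$, which is bounded by $C\sigma^2\,m\,\|\Delta_{x_0,x}\|_\infty^2$ because each $\varepsilon$-coordinate enters at most two of the $\zeta(y)$'s, and $\|\Delta_{x_0,x}\|_\infty$ is bounded (by boundedness of $f$, or by the H\"older condition applied on the patch). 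The standard Gaussian tail then yields $\mathbb{P}(|S_2(x)|\geq \tfrac12 z\sqrt m)\leq 2\exp(-c\, z^2)$ with no restriction on $z$.

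For the quadratic term $S_1(x)$, the variables $\zeta(y)^2-2\sigma^2$ are centered but sub-exponential rather than sub-Gaussian. Moreover they are dependent, since $\zeta(y)$ and $\zeta(y')$ share $\varepsilon$-coordinates. The clean tool is to write $S_1(x)=\varepsilon^{\top}\!A\,\varepsilon-\mathbb{E}[\varepsilon^{\top}\!A\,\varepsilon]$ for an explicit symmetric matrix $A$ encoding the pairs $(T_{-x}y,y)$, and apply the Hanson--Wright inequality
\begin{equation*}
\mathbb{P}\!\left(\bigl|\varepsilon^\top A\varepsilon-\mathbb{E}[\varepsilon^\top A\varepsilon]\bigr|\geq t\right)\leq 2\exp\!\left(-c\min\!\left(\frac{t^2}{\sigma^4\|A\|_F^2},\,\frac{t}{\sigma^2\|A\|_{\text{op}}}\right)\right).
\end{equation*}
A direct inspection of $A$ gives $\|A\|_{\text{op}}=O(1)$ and $\|A\|_F^2=O(m)$, so with $t=\tfrac12 z\sqrt m$ the two arguments of the minimum are of order $z^2$ and $z\sqrt m$ respectively. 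The restriction $z\leq c_2 m^{1/2}$ is precisely what forces $z^2\leq z\sqrt m$, so the sub-Gaussian branch dominates and we obtain $\mathbb{P}(|S_1(x)|\geq \tfrac12 z\sqrt m)\leq 2\exp(-c_3 z^2)$. Combining the two bounds and absorbing constants yields the claim.

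The routine part is the Gaussian tail for $S_2$; the main obstacle is $S_1$, specifically: identifying the right concentration inequality for a sum of dependent $\chi^2$-type terms, and checking that the Frobenius and operator norms of the Gram-type matrix $A$ scale as $O(m)$ and $O(1)$ respectively so that the sub-Gaussian branch of Hanson--Wright controls the tail throughout the stated range of $z$. If one prefers to avoid Hanson--Wright, an equivalent route is to expand $\zeta(y)^2=\varepsilon(T_{-x}y)^2-2\varepsilon(T_{-x}y)\varepsilon(y)+\varepsilon(y)^2$, group the pure squares (genuinely independent $\chi^2_1$ variables, handled by Bernstein) and the cross products (martingale differences with bounded moment generating function, handled by the Azuma/Bennett inequality), at the cost of a slightly longer but entirely elementary argument.
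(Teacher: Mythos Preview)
Your proof is correct and takes a genuinely different route from the paper. The paper does not split $S(x)$; it keeps the summands together as $\xi(y)=\zeta(y)^2-2\sigma^2+2\Delta_{x_0,x}(y)\zeta(y)$, observes that each $\xi(y)$ has a bounded moment generating function in a neighbourhood of the origin (because $\zeta(y)\sim\mathcal N(0,2\sigma^2)$), and then runs the classical Chernoff argument: Markov's inequality on $e^{tS(x)}$, a two-term Taylor expansion of the cumulant generating function $\psi_y$, and a uniform bound on $\psi_y''$ obtained via the elementary inequality $u^2e^u\leq e^{3u}$. Optimising in $t$ over the range $|t|\lesssim 1$ gives the sub-Gaussian tail for $z\lesssim m^{1/2}$.

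What the two approaches buy: the paper's argument is entirely self-contained and avoids any black box, but it silently factorises $\mathbb{E}e^{tS(x)}=\prod_y\mathbb{E}e^{t\xi(y)}$, i.e.\ treats the $\xi(y)$ as independent, which is only literally true when the translated patch $T_x(\mathbf{U}_{x_0,\eta})$ does not overlap $\mathbf{U}_{x_0,\eta}$ (so for $\|x-x_0\|_\infty>2\eta$). Your approach, by writing $S_1(x)$ as a Gaussian quadratic form and invoking Hanson--Wright with the easy norm estimates $\|A\|_F^2=O(m)$, $\|A\|_{\mathrm{op}}=O(1)$, handles the overlap uniformly in $x$ without any extra case distinction; the price is that you call on a named inequality rather than giving a bare-hands MGF computation. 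Your suggested alternative (expanding $\zeta(y)^2$, Bernstein for the pure squares, Azuma for the cross terms) would reproduce the paper's level of elementarity while still respecting the dependence.
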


\begin{proof}
Let $m$ is given by (\ref{defi m}).
Denote $\xi \left( y\right) =\zeta \left( y\right) ^{2}-2\sigma
^{2}+2\Delta_{x_0,x} \left( y\right) \zeta \left( y\right) .$ Since $\zeta
\left( y\right) $ is a normal random variable with mean $0$ and variance $%
2\sigma ^{2}$, there exist two positive constants $t_{0}$ and $c_{4}$
depending only on $\beta ,$ $L$, and $\sigma ^{2}$ such that $\phi
_{y}\left( t\right) =Ee^{t\xi (y)}\leq c_{4},$ for any $\left\vert
t\right\vert \leq t_{0}.$ Let $\psi _{y}(t)=\ln \phi _{y}\left( t\right) $
be the cumulant generating function. By Chebyshev's exponential inequality
we get
\begin{equation}
\mathbb{P}\{S(x)>z\sqrt{m}\}\leq \exp \left(-t\sqrt{m}z+\sum_{y\in \mathbf{U}_{x_{0},\eta }}\psi _{y}(t)\right),
\label{inquality sx}
\end{equation}
for any $\left\vert t\right\vert \leq t_{0}$ and for any $z>0.$ By three
term Taylor expansion, for $\left\vert t\right\vert \leq t_{0},$%
\begin{equation}
\psi _{y}(t)=\psi _{y}(0)+t\psi _{y}^{\prime }(0)+\frac{t^{2}}{2}\psi
_{y}^{\prime \prime }{(\theta t)},
\label{equ psi}
\end{equation}%
where $\left\vert \theta \right\vert \leq 1,$ $\psi _{y}(0)=0,$ $\psi
_{y}^{\prime }(0)=\mathbb{E}\xi (y)=0$ and%
\begin{equation*}
0\leq \psi _{y}^{\prime \prime }(t)=\frac{\phi _{y}^{\prime \prime }\left(
t\right) \phi _{y}\left( t\right) -\left( \phi _{y}^{\prime }\left( t\right)
\right) ^{2}}{\left( \phi _{y}\left( t\right) \right) ^{2}}\leq \frac{\phi
_{y}^{\prime \prime }\left( t\right) }{\phi _{y}\left( t\right) }.
\end{equation*}%
Since, by Jensen's inequality $\mathbb{E}e^{t\xi (y)}\geq e^{t\mathbb{E}\xi (y)}=1,$ we arrive
at the following upper bound%
\begin{equation*}
\psi _{y}^{\prime \prime }(t)\leq \phi _{y}^{\prime \prime }\left( t\right)
=\mathbb{E}\left(\xi ^{2}(y)e^{t\xi (y)}\right).
\end{equation*}%
Using the elementary inequality $x^{2}e^{x}\leq e^{3x},$ $x\geq 0,$ we have,
for $\left\vert t\right\vert \leq t_{0}/3,$
\begin{eqnarray}
\psi _{y}^{\prime \prime }(t)
&\leq &
\frac{9}{t_{0}^{2}}\mathbb{E}\left(\left( \frac{t_{0}}{3}%
\xi (y)\right) ^{2}e^{\frac{t_{0}}{3}\xi (y)}\right)
\nonumber\\&\leq&
 \frac{9}{t_{0}^{2}}%
\mathbb{E}e^{t_{0}\xi (y)}\leq \frac{9}{t_{0}^{2}}c_{4}.
\label{ineq psi}
\end{eqnarray}
The inequality (\ref{ineq psi}) combining with (\ref{equ psi}) implies that for $\left\vert t\right\vert \leq t_{0},$%
\begin{equation*}
0\leq \psi _{y}(t)\leq \frac{9c_{4}}{2t_{0}^{2}}t^{2}.
\end{equation*}%
Then (\ref{inquality sx}) becomes
\begin{equation}
\mathbb{P}\left( S(x)>z\sqrt{m}\right) \leq \exp \left(-tz\sqrt{m}+\frac{9c_{4}}{%
2t_{0}^{2}}mt^{2}\right).  \notag
\end{equation}%
If $t=c^{\prime }z/\sqrt{m}\leq t_{0}/3$, we obtain%
\begin{equation}
\mathbb{P}\left( S(x)>z\sqrt{m}\right) \leq \exp \left( -c'z^{ 2}\left( 1-\frac{%
9c_{4}}{2t_{0}^{2}}c^{\prime }\right) \right) .  \notag
\end{equation}%
Choosing $c^{\prime }$ sufficiently small we arrive at%
\begin{equation*}
\mathbb{P}\left( S(x)>z\sqrt{m}\right) \leq \exp \left( -c_{3}z^{2}\right) ,
\end{equation*}%
for some constant $c_{3}>0.$ In the same way we show that%
\begin{equation*}
\mathbb{P}\left( S(x)<-z\sqrt{m}\right) \leq \exp \left( -c_{3}z^{2}\right) .
\end{equation*}%
This proves the lemma.
\end{proof}

Finally, we turn to the proof of Theorem \ref{th similar function}. Applying
Lemma \ref{Lemma s} with $z=\sqrt{\frac{1}{c_3}\ln n^2}$, we see that
\begin{equation*}
\mathbb{P}\left( \frac{1}{m}\left\vert S(x)\right\vert \geq \frac{\sqrt{\frac{1}{c_{3}
}\ln n^2}}{\sqrt{m}}\right) \leq 2\exp \left( -\ln n^2\right)=\frac{2}{n^2}.
\end{equation*}
From this inequality we easily deduce that
\begin{eqnarray*}
&&\mathbb{P}\left( \max_{x\in \mathbf{U}_{x_{0},h}^{\prime }}\frac{1}{m}\left\vert
S(x)\right\vert \geq \frac{\sqrt{\frac{1}{c_{3}}\ln n^2}}{\sqrt{m}}\right)
\\&\leq&
 \sum_{x\in \mathbf{U}_{x_{0},h}^{\prime }}\mathbb{P}\left( \frac{1}{m}\left\vert
S(x)\right\vert \geq \frac{\sqrt{\frac{1}{c_{3}}\ln n^2}}{\sqrt{m}}%
\right)\leq \frac{2}{n}.
\end{eqnarray*}
Taking $m=(2N\eta+1)^2=c_0^2n^{1-2\alpha}$, we arrive at
\begin{equation}  \label{ineq p s}
\mathbb{P}\left( \max_{x\in \mathbf{U}_{x_0,h} }\frac{1}{m}
\left|S(x)\right|\geq c_1n^{\alpha-\frac{1}{2}}\right) \leq
O\left(n^{-1}\right).
\end{equation}

The local H\"{o}lder condition (\ref{Local Lip cond}) implies that
\begin{equation}
\left|\widehat{\rho}_{x_0}^{2}(x)-\rho^2(x)\right|\leq O\left( n^{-\frac{2\beta}{%
2\beta+2}}\right) + \frac{1}{m}\left|S(x)\right|.
 \label{ineq distance rho}
\end{equation}
Combining (\ref{ineq p s}) and (\ref{ineq distance rho}), we get
\begin{eqnarray}
&&\mathbb{P}\left( \max_{x\in \mathbf{U}_{x_0,h}}\left|\widehat{\rho}_{x_0}%
^{2}(x)-\rho^2(x)\right|\geq O\left(n^{-\frac{2\beta}{2\beta+2}}\right)+
c_1n^{\alpha-\frac{1}{2}}\right)
\nonumber\\&\leq&
O\left(n^{-1}\right).
\label{equation p}
\end{eqnarray}
Because the  condition $\frac{(1-\beta)^+}{2\beta+2}<\alpha<\frac{1}{2}$ implies that
\begin{equation}
n^{\alpha-\frac{1}{2}}> n^{-\frac{2\beta}{2\beta+2}},
\end{equation}
the
 inequality  (\ref{rate similar function}) holds.

\subsection{\label{proof of rate estimator}Proof of Theorem \protect\ref{th
rate estimator}}

Taking into account (\ref{estimator similar}), (\ref{estimate}), and the
independence of $\epsilon (x)$, we have
\begin{equation}
\mathbb{E}\left(|\widehat{f}'_{h }(x_{0})-f(x_{0})|^{2}\big|Y(x),x\in
\mathbf{I}''_{x_{0}}\right)
\leq g^{\prime }(\widehat{w}_h),  \label{MSE estimator}
\end{equation}%
where
\begin{equation*}
g^{\prime }(w)=\left( \sum_{x\in \mathbf{U}'_{x_{0},h}}{w}
(x)\rho_{f,x_0} (x)\right) ^{2}+\sigma ^{2}\sum_{x\in \mathbf{U}'_{x_{0},h}}%
{w}^{2}(x).
\end{equation*}%
From the proof of Theorem \ref{th_oracle}, we infer that
\begin{equation}
g^{\prime }(w^{\ast }_{h})\leq \frac{3}{2}\left( \frac{2^{\frac{2\beta +6}{%
2\beta +2}}\sigma ^{\frac{4\beta }{2\beta +2}}L^{\frac{4}{2\beta +2}}}{\beta
^{\frac{2\beta }{2\beta +2}}}n^{-\frac{2\beta }{2\beta +2}}\right) .
\label{gx bound}
\end{equation}%

By Theorem \ref{th similar function} and its proof, for $\widehat{\rho}'_{x_0}$
 defined by (\ref{estimator similar}), there is a constant $c_{1}$
such that
\begin{eqnarray}
&&\mathbb{P}\left( \max_{x\in \mathbf{U}'_{x_{0},h}}\left\vert \widehat{\rho }_{x_0}
^{'2}(x)-\rho_{f,x_0} ^{2}(x)\right\vert \geq c_{1}n^{\alpha -\frac{1}{2}}\sqrt{\ln n}%
\right)
\nonumber\\&=&
O\left( n^{-1}\right) .  \label{rate01 similar function}
\end{eqnarray}
Let $\mathbf{B}=\left( \max_{x\in \mathbf{U}'_{x_{0},h}}\left\vert \widehat{%
\rho }_{x_0}^{'2}(x)-\rho_{f,x_0} ^{2}(x)\right\vert \leq c_{1}n^{\alpha -\frac{1}{2}%
}\right) $. On the set $\mathbf{B}$, we have $\rho_{f,x_0} ^{2}(x)-c_{1}n^{\alpha -%
\frac{1}{2}}<\widehat{\rho }_{x_0}^{'2}(x)<\rho_{f,x_0} ^{2}(x)+c_{1}n^{\alpha -\frac{1}{2}}
$, from which we infer that
\begin{equation*}
\begin{split}
\widehat{w}_{h}(x)
& =
\frac{e^{-\frac{\widehat{\rho }_{x_0}^{'2}(x)}{H^{2}}}}{\sum_{y\in
\mathbf{U}_{x_{0},h}^{\prime }}e^{-\frac{\widehat{\rho }_{x_0}^{'2}(y)}{H^{2}}}}%
\\&\leq
\frac{e^{-\frac{{\rho }_{f,x_0}^{2}(x)-c_{1}n^{\alpha -\frac{1}{2}}}{H^{2}}}}{%
\sum_{y\in \mathbf{U}_{x_{0},h}^{\prime }}e^{-\frac{{\rho }_{f,x_0}
^{2}(y)+c_{1}n^{\alpha -\frac{1}{2}}}{H^{2}}}} \\
& \leq
\frac{e^{-\frac{{\rho }_{f,x_0}^{2}(x)}{H^{2}}}\left( 1+2\frac{c_{1}n^{\alpha
-\frac{1}{2}}}{H^{2}}\right) }{\sum_{y\in \mathbf{U}_{x_{0},h}^{\prime }}e^{-%
\frac{{\rho }_{f,x_0}^{2}(y)}{H^{2}}}\left( 1-\frac{c_{1}n^{\alpha -\frac{1}{2}}}{%
H^{2}}\right) }
\\&=
\left( \frac{1+\frac{2c_{1}n^{\alpha -\frac{1}{2}}}{H^{2}}}{%
1-\frac{c_{1}n^{\alpha -\frac{1}{2}}}{H^{2}}}\right) w^{\ast }_{h}
\end{split}%
\end{equation*}%
This implies that
\begin{equation*}
g^{\prime }(\widehat{w}_{h})\leq \left( \frac{1+\frac{2c_{1}n^{\alpha -\frac{%
1}{2}}}{H^{2}}}{1-\frac{c_{1}n^{\alpha -\frac{1}{2}}}{H^{2}}}\right)
^{2}g^{\prime  }(w^*_{h}).
\end{equation*}%
Consequently, the inequality (\ref{MSE estimator}) becomes
\begin{eqnarray}
&&\mathbb{E}\left(|\widehat{f}'_{h }(x_{0})-f(x_{0})|^{2}\big|Y(x),x\in \mathbf{I}''_{x_{0}},%
\mathbf{B}\right)
\nonumber\\&\leq &
\left( \frac{1+\frac{2c_{1}n^{\alpha -\frac{1}{2}}}{H^{2}}}{%
1-\frac{c_{1}n^{\alpha -\frac{1}{2}}}{H^{2}}}\right) ^{2}g^{\prime  }(w^*_{h}).  \label{expection bound}
\end{eqnarray}
Since the function $f$ satisfies the local H\"{o}lder condition (\ref{Local Lip cond}),
\begin{equation}
\mathbb{E}\left( |\widehat{f}'_{h}(x_{0})-f(x_{0})|^{2}\big|Y(x),x\in \mathbf{I}''_{x_{0}}\right) <g'(\widehat{w}_h)\leq c_{2},  \label{expection bound K}
\end{equation}%
for a constant $c_{2}>0$ depending only on $\beta $, $L$, and $\sigma $.
Combining (\ref{rate similar function}), (\ref{expection bound}), and (\ref%
{expection bound K}), we have
\begin{equation*}
\begin{split}
\mathbb{E}& \left( |\widehat{f}'_{h}(x_{0})-f(x_{0})|^{2}\big|Y(x),x\in \mathbf{I}''_{x_{0}},\right)  \\
=& \mathbb{E}\left(|\widehat{f}'_{h }(x_{0})-f(x_{0})|^{2}\big|Y(x),x\in \mathbf{I}''_{x_{0}},%
\mathbf{B}\right)\mathbb{P}(\mathbf{B}) \\
& +\mathbb{E}\left(|\widehat{f}'_{h }(x_{0})-f(x_{0})|^{2}\big|Y(x),x\in \mathbf{I}''_{x_{0}},%
\overline{\mathbf{B}}\right)\mathbb{P}(\overline{\mathbf{B}}) \\
\leq & \left( \frac{1+\frac{2c_{1}n^{\alpha -\frac{1}{2}}}{H^{2}}}{1-\frac{%
c_{1}n^{\alpha -\frac{1}{2}}}{H^{2}}}\right) ^{2}g'(w^*_h)+O\left( n^{-1}\right)
\end{split}%
\end{equation*}%
%
Now, the assertion of the theorem is obtained easily if we note the inequality
(\ref{gx bound}).


\end{document}